\newtheorem{remark}{Remark}
\newtheorem{theorem}{Theorem}
\newtheorem{lemma}{Lemma}
\newtheorem{coro}{Corollary}
\newtheorem{definition}{Definition}
\newtheorem{example}{Example}
	\newcommand{\Cb}{\mathbb{C}}
	\newcommand{\Gc}{\mathcal{G}}
	\newcommand{\Tc}{\mathcal{T}}
	\DeclareMathOperator*{\argmax}{arg\,max}
\newcommand{\bR}{\Bar{R}}
\definecolor{color1}{rgb}{0.00000,0.44700,0.74100}%
\colorlet{myred}{red!80!black}%
\colorlet{myblue}{color1!80!black}%
\colorlet{green}{green!80!black}%
\newcommand{\cblue}[1]{{\color{blue}{#1}}}
\begin{document}

\title{{\cblue{Benefits of Coded Caching for Multi-antenna Communications with Linear Precoding}}\vspace{-0.1cm}}
\title{Vector Coded Caching Multiplicatively Boosts the Throughput of Realistic Downlink Systems\vspace{-0.1cm}}
\title{{Vector~Coded~Caching~Multiplicatively~Increases} {the Throughput~of~Realistic~Downlink~Systems}\vspace{-0.1cm}}
\author{Hui Zhao, Antonio Bazco-Nogueras, and Petros Elia\vspace{-1.5cm}
	\thanks{Hui Zhao and Petros Elia are with the Communication Systems Department, EURECOM, 06410 Sophia Antipolis, France (email: hui.zhao@eurecom.fr; elia@eurecom.fr). Antonio Bazco-Nogueras is with the IMDEA Networks Institute, 28918 Madrid, Spain (email: antonio.bazco@imdea.org).
	
	This work is supported in part by the Regional Government of Madrid through the grant 2020-T2/TIC-20710 for Talent Attraction, and by the European Research Council under the EU Horizon 2020 research and innovation program/ERC grant agreement no. 725929 (ERC project DUALITY). 
	}
}


\maketitle

	\begin{abstract}\vspace{-0.3cm}
	The recent introduction of vector coded caching has revealed that multi-rank transmissions in the presence of receiver-side cache content can dramatically ameliorate the file-size bottleneck of coded caching and substantially boost performance in error-free wire-like channels. 
	We here employ large-matrix analysis to explore the effect of vector coded caching in realistic wireless multi-antenna downlink systems. Our analysis answers a simple question: Under a fixed set of antenna and SNR resources, and a given downlink MISO system which can already enjoy an optimized exploitation of multiplexing and beamforming gains, what is the multiplicative boost in the throughput when we are now allowed to occasionally add content inside reasonably-sized receiver-side caches?    
	The derived closed-form expressions capture various linear precoders, and a variety of practical considerations such as power dissemination across signals, realistic SNR values, as well as feedback costs. The schemes are very simple (we simply collapse precoding vectors into a single vector), and the recorded gains are notable. For example, for $32$ transmit antennas, a received SNR of $20$ dB, a coherence bandwidth of $300$ kHz, a coherence period of $40$ ms, and under realistic file-size and cache-size constraints, vector coded caching is here shown to offer a multiplicative throughput boost of about $310\%$ with ZF/RZF precoding and a $430\%$ boost in the performance of already optimized MF-based systems. Interestingly, vector coded caching also accelerates channel hardening to the benefit of feedback acquisition, often surpassing 540\% gains over traditional hardening-constrained downlink systems.

   \end{abstract}

	\begin{IEEEkeywords}\vspace{-0.3cm}
	    Coded caching, linear precoding, multi-antenna transmission, random matrix analysis, downlink systems.
	\end{IEEEkeywords}

	\IEEEpeerreviewmaketitle

\section{Introduction}\label{Intro_sec}
\IEEEPARstart{C}{aching} is widely considered to be a valuable resource toward alleviating traffic congestion in various networks~\cite{Paschos,Ciso_forest}. A particularly powerful method for exploiting cache resources can be found in the seminal work of Maddah-Ali and Niesen~\cite{Ali}, who introduced the coded caching framework as a means for exploiting cache-aided side information at the receivers in order to remove interference. This breakthrough was originally presented for the single-stream (single-antenna), error-free, shared-link Broadcast Channel (BC), over which a central server delivers content to $K$ cache-aided users. 
In this context, the server has access to a library of $N$ files, and each user has access to their own dedicated cache of normalized size $\gamma\triangleq\frac{M}{N}\in[0,1]$ corresponding to an individual cache-size equal to the size of $M = \gamma N$ files, and corresponding to a cumulative cache size equal to $K\gamma$ times the size of the library. After a combinatorial content-allocation in each cache during the placement phase, and after each user reveals its demanded file, the delivery phase in~\cite{Ali} employed a novel clique-based scheme that transmitted XORs that could serve $K \gamma+1$ users at a time. 
This astounding multiplicative speed-up factor of $K \gamma+1$ over single-stream cacheless systems was based on the idea that a single XOR carries the desired subfiles of $K \gamma+1$ users, and that these users can utilize their own cached side information to remove undesired subfiles from the XOR in order to recover their own subfile. 
Unfortunately, the clique-based structure of the so-called MN coded caching scheme in~\cite{Ali} requires that the size of each file grows exponentially in $K$ (cf.~\cite{shanmugam2016finite,QifaPDAtit}). This in turn effectively implies --- under realistic file sizes --- a much reduced real speedup factor $\Lambda \gamma+1 \ll K \gamma+1$ for some maximum allowed number of cache-states\footnote{The cache state defines the content stored at the cache of a certain user. Two users sharing the same cache state must store the exact same content in their cache. Having fewer cache-states implies smaller subpacketization and thus smaller required file sizes. A bounded file size forcefully reduces $\Lambda$ as well as the corresponding gain $\Lambda\gamma+1$.} $\Lambda \ll K$. This problem of subpacketization-constrained (or file-size constrained) coded caching is thoroughly documented in a variety of works such as~\cite{QifaPDAtit,HypergraphCodedCachingTit,Krishnan_TIT} as well as~\cite{Ema,Jin,Zhao2021_ISIT,Ibrahim2019}.

At the same time, it also became apparent that for coded caching to develop into an impactful ingredient in wireless systems, it would have to work in conjunction with multi-antenna arrays which are rightfully recognized as the most valuable resource in modern networks. This realization brought to the fore notable research in the area of \emph{multi-antenna coded caching}~\cite{Shariatpanahi_CC,Naderializadeh}, which considers the same model as the aforementioned cache-aided BC, except that now the server (the base-station) is endowed with multiple transmit antennas. In recent years, several related works explored various aspects of the problem, with substantial emphasis on physical-layer considerations. 
One of the first such works can be found in~\cite{Shariatpanahi} which designed physical-layer adaptations of various multi-antenna coded caching schemes. Another interesting approach can be found in~\cite{Bergel2018} which presented a multi-antenna coded-caching scheme for lower SNR regimes when the placement exploits prior information on the users' locations. Furthermore, the work of~\cite{Ngo_TWC} considered the use of transmit antennas for achieving rate scalability in the limit of large $K$, while the work in~\cite{Tolli2020,Meixia_Tao_TWC2019} nicely considered the fusion of multi-antenna multicast beamforming and coded caching toward improved interference management. Interesting work can also be found in~\cite{Bayat,MohammadJavadTWCwithus2021,mohajer2020miso,Karat2019,Salehi2019,Jing,Xu,Vu,Antti9348098,Antti9083779} and in a variety of other publications\nocite{LamEliCachelessTit}. It is the case though that for most of the above schemes, the corresponding degrees-of-freedom (DoF) impact of caching was merely additive to the multiplexing gain (denoted here by $Q$), in the sense that in most of the above scenarios, the DoF performance stagnated at around $Q+\Lambda\gamma$ for very modest values of $\Lambda\gamma$. In essence, \emph{due to the severity of the file-size constraint, the impact of caching was dwarfed by the existing and available multiplexing gains} which have been extensively demonstrated in various field trials\cite{report16layers}.

This imbalance in the impact of caching on multi-antenna systems was reversed with the introduction in~\cite{Lampiris_JSAC} of vector coded caching. This reversal is owed in part to the fact that this new approach could dramatically ameliorate the subpacketization problem previously associated to XOR-based schemes. While previous multi-antenna coded caching techniques essentially focused on using multiple antennas ($L$ transmit antennas) to efficiently deliver the aforementioned sequence of XORs of the original MN scheme, the novel method in~\cite{Lampiris_JSAC} applied a decomposition-based approach that employed a clique structure \emph{on vectors} rather than on scalars. Vector coded caching need not entail the transmission of XORs. Building on the idea of employing $\Lambda$ shared caches ($\Lambda$ cache states) and linear precoding, the algorithm in~\cite{Lampiris_JSAC} was able to offer unprecedented performance as well as a dramatically reduced subpacketization. To be precise, for some $Q\leq L$ representing the aforementioned multiplexing gain of choice, the algorithm in~\cite{Lampiris_JSAC} reduced subpacketization from being exponential in $\Lambda$ to being exponential in $\Lambda/Q$, all while being able to serve up to $Q(1+\Lambda\gamma)$ users at a time. This implied a theoretical multiplicative boost over the DoF of multiplexing-gain systems by a factor of $1+\Lambda \gamma$, with the new DoF of $Q(1+\Lambda\gamma)$ far exceeding the additive impact (see DoF of $Q+\Lambda\gamma)$ of previous XOR-based multi-antenna coded caching approaches. It is the case though that the work in~\cite{Lampiris_JSAC} focused on the error-free, asymptotically high-SNR regime, without considering any practical aspects such as power dissemination across signals, realistic SNR values, the effects of beamforming gain, or the costs of gathering channel state information (CSI). With the exception of some preliminary works like the one in~\cite{Zhao_WSA}, we know very little about the practical performance of vector coded caching in wireless systems. While this new approach was shown to be useful in an information-theoretic (DoF) sense, the real impact that this approach has on optimized downlink systems, has remained an open question.  

 
Any attempt to establish the real impact of vector coded caching must answer a simple question: Under a fixed set of antenna and SNR resources, what is the multiplicative throughput boost obtained from being able to add receiver-side caches to downlink systems that would have otherwise been able to enjoy an optimized exploitation of multiplexing and beamforming gains.    
Indeed, spatial multiplexing and beamforming in multi-antenna downlink systems, and its well-studied application in the large-antenna regime or \emph{massive} multiple-input multiple-output  (MIMO)~\cite{Rusek,LuLu,Emil_unbounded,Rajatheva_whitepaper}, is a key technology in current and future wireless networks that significantly enhances spectral efficiency. Such enhancements have been recently proven in the aforementioned field trials~\cite{report16layers} which demonstrate that a sizeable fraction of the promising theoretic gains brought about by spatial multiplexing approaches, can indeed be attained under practical constraints.

While very considerable research has focused on a variety of advanced precoding schemes, the work-horses of spatial-multiplexing precoding are the optimized versions of linear precoding techniques such as Zero-Forcing (ZF), Regularized ZF (RZF), and Matched Filtering (MF). These techniques maintain low complexity and an ability to provide very high spectral efficiency that often comes close to the optimal performance of the non-linear Dirty-Paper Coding, especially when the number of transmit antennas $L$ is large~\cite{Rusek}. Furthermore, as one would expect, the acquisition of CSI is another ingredient of crucial importance in such systems, even in the presence of Time Division Duplexing (TDD) that partially reduces the CSI overhead as the dimensionality of the problem becomes larger~\cite{Larsson_Mag}. 
This same CSI overhead brings to the fore the issue of channel hardening, which arises as the number of antennas increases, and which partially alleviates the stringent CSI requirements~\cite{Ngo2017hardening}. 

Despite the aforementioned notable research, it is indeed the case that current cellular systems remain under pressure from the increasing user densities and data volumes~\cite{Ciso_forest}. This pressure motivates the search for new resources, and new algorithmic ways to exploit these resources.

\subsubsection*{Structure of Paper and Current Contributions}
The remainder of this paper is organized as follows. We introduce the system model and the considered framework in Section~\ref{sys_sec}. 
Subsequently, in Section~\ref{large_ana_sec}, we first adapt the vector coded caching approach of~\cite{Lampiris_JSAC} to realistic SNR values, while considering three different linear precoding schemes: ZF, RZF and MF. After doing so, we proceed to employ random matrix theory to analyze (in Theorem~\ref{Tight_Bound_MF_coro} for MF, Theorem~\ref{Rate_ZF_lemma} for ZF, and Theorem~\ref{Rate_RZF_lemma} for RZF) the achievable throughput of vector coded caching for the three aforementioned precoders. This analysis --- which naturally incorporates the standard cacheless case corresponding to $\gamma = 0$ --- captures any SNR and any number of users. 

Subsequently, based on the derived asymptotic performance, in Section~\ref{L1_opt_sec} we optimize both the cacheless as well as the cache-aided algorithms by accounting for the CSI acquisition costs, and by optimizing over the total number of simultaneously served streams (users). This optimization, which is performed as a function of SNR, of $L$ and of the CSI acquisition costs, can be found in Theorems~\ref{MF_cost_lemma},~\ref{ZF_cost_lemma}. The same optimization yields systems that are separately calibrated to better balance multiplexing gains with beamforming gains, in the presence or absence of caching. 
In this same section we also derive the ratio between the throughputs of the (independently) optimized cache-aided and cacheless systems. This ratio represents the multiplicative throughput boost offered by caching, over optimized cacheless downlink systems with the same power and antenna resources. 
Subsequently, in Section~\ref{numerical_sec} we numerically verify the accuracy of the derived expressions, showing that they characterize very precisely the actual performance. This evaluation allows us to demonstrate the substantial gains from using caching, highlighting realistic regimes of SNR, $L$, CSI costs, file sizes and cache sizes. In Section~\ref{conclude_sec} we present the main conclusions, while in the appendices we host some of the remaining proofs.

\emph{Notations:} $\mathbb{C}$ stands for the set of complex numbers, ${\bf I}_L \in \mathbb{C}^{L \times L}$ denotes the $L \times L$  identity matrix, and ${\bf 0}_L \in \mathbb{C}^{L \times 1}$ denotes the all-zero vector. We use~$X \sim \mathcal{Y}$ to denote that~$X$ follows the statistical distribution~$\mathcal{Y}$. 
Furthermore, $|\cdot|$  denotes either the cardinality of a set or the magnitude of a complex number, $||\cdot||$ denotes the norm-2 operator for a vector, while we also define $[Z] \triangleq \{1,2,\cdots,Z\}$ for a positive integer $Z$. Additionally, ${\rm Tr}\{\cdot\}$ and $\mathbb{E}\{\cdot\}$ denote the trace and  the expectation operators, respectively, whereas $(\cdot)^T$, $(\cdot)^*$ and $(\cdot)^H$ denote the non-conjugate transpose, conjugate part and conjugate transpose of a matrix, respectively. In asymptotic analysis, $f(x) = o(g(x))$ as $x\to \infty $ denotes that $\lim_{x\to \infty}\frac{f(x)}{g(x)} = 0$. 
Additionally, $\stackrel{a.s.}{\longrightarrow}$ stands for  almost sure convergence. 
If $X \stackrel{a.s.}{\longrightarrow} \mathring{X}$ and $\mathring{X}$ is deterministic, we call $\mathring{X}$  the asymptotic deterministic equivalent of $X$. Moreover, in the limit of $x\to \infty$, our use of $A(x) \doteq B(x)$ will mean that $A(x) = B(x) + o(1)$.

\section{System Model and Problem Description}\label{sys_sec}

\subsection{System Model}\label{JSAC_Intro}
We consider a downlink MISO scenario where an $L$-antenna base station (BS) 
serves $K$ single-antenna cache-aided users. The BS has access to a library 
of $N$ equally-sized files, and each user is endowed with a local memory (or cache) of size equal to the size of $M$ library files ($M < N$), such that each user can store a fraction $\gamma = \frac{M}{N} \in [0,1)$ of the library content. We denote the library content by $\mathcal{F}$ and the $n$-th file by $W_n$, such that $\mathcal{F} \triangleq \{W_n\}_{n=1}^N$.

We consider the wireless channel to be modeled as a symmetric Rayleigh fading channel, where all channel coefficients are assumed to be independent and identically distributed (i.i.d.). 
When describing a general transmission, our notation will often incorporate the subset $\mathcal{K} \subseteq [K]$ of users that are simultaneously served during that transmission. Consequently, in our communication model, the received signal at the $k$-th user in $\mathcal{K}$ is given by
\vspace{-0.2cm}
\begin{align}
    y_{\mathcal{K}(k)} = {\bf h}_{\mathcal{K}(k)}^T {\bf x}_{\mathcal{K}} + z_{\mathcal{K}(k)},\vspace{-0.2cm}
\end{align}
where  $k \in [|\mathcal{K}|]$, where $z_{\mathcal{K}(k)} \in \mathbb{C}$ represents the corresponding Additive White Gaussian Noise (AWGN) with zero-mean and unit-variance, where ${\bf x}_{\mathcal{K}}\in \mathbb{C}^{L\times 1}$ denotes the transmitted signal vector that simultaneously serves the users in $\mathcal{K}$, and where ${\bf h}_{\mathcal{K}(k)} \in \mathbb{C}^{L\times 1}$ represents the channel vector for the channel from the BS to the $k$-th user in $\mathcal{K}$. As mentioned, ${\bf h}_{\mathcal{K}(k)}$ is assumed to be an i.i.d. Gaussian random vector with mean ${\bf 0}_{L}$ and covariance matrix ${\bf I}_{L}$. Finally, ${\bf x}_{\mathcal{K}}$ is obtained by applying a specific precoding scheme (which we will detail later on) to the information vector ${\bf s}_{\mathcal{K}} \in \mathbb{C}^{|\mathcal{K}| \times 1}$ intended for the users in $\mathcal{K}$, where ${\bf s}_{\mathcal{K}}$ has mean ${\bf 0}_{|\mathcal{K}|}$ and covariance matrix ${\bf I}_{|\mathcal{K}|}$.

We consider an average power normalization, where the power is averaged over both transmit symbols and channel realizations. As is common in practical downlink settings, we assume TDD uplink-downlink transmissions, such that the BS estimates the downlink channels through uplink pilot transmissions by applying channel reciprocity. 
 
We proceed to describe the main structure of the scheme, first doing so without specifying the linear precoding class that is used. We will also formally define the main performance metrics investigated in this paper.

\subsection{Signal-Level Vector Coded Caching for Finite SNR}\label{JSAC_Intro}
		Building on the general vector-clique structure in~\cite{Lampiris_JSAC}, we are here free to choose the precoding schemes, as well as calibrate at will the dimensionality of each vector clique. This freedom is essential in controlling the impact of CSI costs and of power-splitting across users, both of which directly affect the performance in practical SNR regimes. \nocite{AdeelTCOM21}
			
		 We proceed to describe the cache placement phase and the subsequent delivery phase.
		 

\subsubsection{Placement Phase}

The first step involves the partition of each library file $W_n$ into ${\Lambda \choose \Lambda \gamma}$ non-overlapping equally-sized subfiles
$\big\{ W_n^\mathcal{T}:  \mathcal{T} \subseteq [\Lambda], |\mathcal{T}|=\Lambda \gamma  \big\}$, each labeled by some $\Lambda\gamma$-tuple $\mathcal{T}\subseteq[\Lambda]$. As discussed in Section~\ref{Intro_sec}, the number of cache states $\Lambda$ is chosen to satisfy the file-size constraint\footnote{In our case, the subpacketization is ${\Lambda \choose \Lambda \gamma}$, which naturally serves as a lower bound on the file sizes.}.   
Subsequently the $K$ users are \emph{arbitrarily} separated into $\Lambda$ disjoint groups $\mathcal{G}_1,\mathcal{G}_2,\dots,\mathcal{G}_\Lambda$, where the $g$-th group, which consists of $B = \frac{K}{\Lambda}$ users, is given by $ \mathcal{G}_g \triangleq \big\{b \Lambda +g\big\}_{b=0}^{B-1} \subseteq [K]$. The $\vartheta$-th user of this $g$-th group is denoted\footnote{We will henceforth consider $K$ to be a multiple of $\Lambda$. This assumption is adopted for the sake of clarity of exposition, and it does not limit the scope of the results in any way. The general case can be readily handled (cf.~\cite{Lampiris_JSAC}). Moreover, this grouping as well as the entire placement phase, are naturally done before the users' requests take place, and of course well before the channel states are known to the BS.} by ${\rm U}_{g,\vartheta}$.

At this point, all the users belonging to the same group are assigned the same cache state and thus proceed to cache \emph{identical} content. In particular, for those in the $g$-th group, this content takes the form
$\mathcal{Z}_{\Gc_g} = \big\{ W_n^{\mathcal{T}}:  \Tc \ni g,\, \forall n\in[N]  \big\}$.

\subsubsection{Delivery Phase}

	This phase starts when each user $\kappa \in [K]$ simultaneously asks for its intended file, denoted here by $W_{d_\kappa}$, $d_\kappa \in [N]$. 
	The BS selects $Q$ users from each group, where $Q\leq B$ is a variable that will be optimized afterwards and which is the equivalent of the multiplexing gain. By doing so, the BS decides to first `encode' over the first $\Lambda Q$ users, and to repeat the encoding process $B/Q$ times\footnote{To clarify, what the above says is the following. If there are, e.g., $B=2Q$ users per group and thus $K = 2 \Lambda Q$ users in total, then the algorithm that we describe here will be first applied to the first $\Lambda Q$ users, and then, after this delivery is done, the same algorithm will apply to the remaining $\Lambda Q$ users, thus eventually satisfying all $K$ users. Also note that a small amount of additional subpacketization can easily resolve the case where $B/Q$ may not be an integer.}. 
	To deliver to the $\Lambda Q$ users, the transmitter employs ${\Lambda \choose \Lambda \gamma+1}$ sequential transmission stages. During each such stage, the BS simultaneously serves a unique set $\Psi$ of $|\Psi| = \Lambda \gamma+1$ groups, corresponding to a total of 
	$Q (\Lambda \gamma+1)$ users served at a time (i.e., per stage).
    At the end of the ${\Lambda \choose \Lambda \gamma+1}$ transmission stages, all the $\Lambda Q$ users obtain their intended files. By repeating this process $\big\lceil\frac{B}{Q}\big\rceil$ times, all the $K$ users obtain their intended files.
    As suggested above, the factor $G \triangleq \Lambda\gamma+1$ describes the number of user-groups that are served simultaneously. Another crucial parameter includes the multiplexing gain $Q$ which, unlike in~\cite{Lampiris_JSAC}, will be here subject to optimization.

	Let us now focus on a single transmission stage. As mentioned above, at each such stage, we pick a set $\Psi\subseteq \Lambda$ of $G =\Lambda \gamma+1$ groups that will be served simultaneously. From within these chosen groups, we will serve $Q \leq B$ users per group.
	In particular, for each user ${\rm U}_{\psi,\vartheta}$ of some group $\psi\in \Psi$, this stage will deliver all subfiles\footnote{In a slight abuse of notation, we use the term ``subfile" to refer both to the actual  subfile generated after file-splitting, as well as to the corresponding complex-valued information symbol $s_{\psi,\vartheta}$.} $s_{\psi,\vartheta}$ by transmitting
	\begin{align}\label{eq:transmitSignal1}
			{\bf x}_\Psi &= \frac{1}{\sqrt{G}} \sum\nolimits_{\psi \in \Psi} \rho_\psi  \sum\nolimits_{\vartheta=1}^{Q}{\bf v}_{\psi,\vartheta} s_{\psi,\vartheta},
	\end{align}
	where ${\bf v}_{\psi,\vartheta} \in \mathbb{C}^{L \times 1}$  denotes the precoder applied to the subfile intended by user ${\rm U}_{\psi,\vartheta}$, and where $\rho_{\psi}$ denotes the power normalization factor for group $\psi\in\Psi$, applied under a total power constraint $P_t$. 
	Upon defining $ {\bf V}_{\psi}\in\Cb^{L\times Q}$ as $ {\bf V}_{\psi} \triangleq [{\bf v}_{\psi,1} \big| \dotsc \big| {\bf v}_{\psi,Q}]$ and ${\bf s}_\psi \in \Cb^{Q}$ as ${\bf s}_\psi \triangleq [s_{\psi,1}, \dotsc, s_{\psi,Q}]^T$, the above takes the simple form	
	\begin{align}\label{eq:transmitSignal2}
			{\bf x}_\Psi 
			&= \frac{1}{\sqrt{G}} \sum\nolimits_{\psi \in \Psi} \rho_\psi  {\bf V}_{\psi} {\bf s}_\psi.
	\end{align}
	\begin{remark}
	    {{It is easy to see that the described scheme simply involves a carefully selected linear combination of $G$ linear-precoding vectors that are now to be sent simultaneously.  It is also easy to see that the above scheme also incorporates the traditional cacheless downlink scenario corresponding to $\gamma = 0$ which itself corresponds to $G = |\Psi|=1$. In such case, the transmit signal expression reverts to the simpler common expression 
	    ${\bf x} 
			= \rho {\bf V}{\bf s}$. 
	    }}
	\end{remark}
	For decoding to work, the subfiles must be chosen carefully. This choice follows the principles of coded caching, and in particular of vector coded caching. Thus, when considering the transmission stage which serves the $G = \Lambda\gamma+1$ groups in $\Psi$, the subfile transmitted to user  ${\rm U}_{\psi,\vartheta}$ is here selected to be {$ W_{d_{\psi,\vartheta}}^{\Psi \setminus \{\psi\}}$, simply because this subfile is stored in the cache of each user of every other group in $\Psi$ except $\psi$.} 
	Because of this structure, the users of a particular group can remove the inter-group interference from the other $\Lambda \gamma$ groups by using their cached content. 
	On the other hand, following the principles of vector coded caching, the intra-group interference is handled with linear precoding that `separates' the signals of the users from the same group.  
	Naturally one can imagine that cache-aided removal of interference as well as `nulling out' of interference, both require 
	knowledge of the composite precoder-channel coefficients (cf. \eqref{y_psi_k_eq_initial}). These so-called \emph{composite CSI} costs will be explicitly accounted for in our analysis.
	We proceed to elaborate on the precoders and the transmissions. 

\vspace{-0.25cm}
\subsection{Vector Coded Caching for the Physical Layer}
We now emphasize on the physical layer details of the communication scheme. Our description will focus on the transmission that serves a specific set~$\Psi$ of user-groups. First let us recall that ${\bf V}_\psi \in \mathbb{C}^{L \times Q}$ denotes the precoding matrix for the symbols\footnote{We note that as is common, our analysis will assume Gaussian signaling.} of users in group~$\psi \in \Psi$.  Then let us note that for an average power constraint~$P_t$, the power normalization factor~$\rho_\psi$ from~\eqref{eq:transmitSignal2}, takes the form 
	$\rho_\psi = \sqrt{\frac{P_t}{\mathbb{E}\{ {\bf s}_\psi^H {\bf V}_\psi^H {\bf V}_\psi {\bf s}_\psi \}}} = 
	 \sqrt{\frac{P_t}{ \mathbb{E} \{{\rm Tr} \{ {\bf V}_\psi^H {\bf V}_\psi \} \}}}$. 
Then the subsequent corresponding received signal at user ${\rm U}_{\psi,k}$ (i.e., at the $k$-th user of group $\psi \in \Psi$), will take the form
    \begin{align}\label{eq:def_intergroup_int}
        y_{\psi,k} & ={\bf h}_{\psi,k}^T {\bf x}_{\Psi} + z_{\psi,k} 
            = \frac{{\bf h}_{\psi,k}^T }{\sqrt{G}} \rho_\psi {\bf V}_{\psi} {\bf s}_{\psi}
            +\underbrace{ \frac{{\bf h}_{\psi,k}^T}{\sqrt{G}}  \sum\nolimits_{\phi \in \Psi,\phi \neq \psi}\rho_\phi {\bf V}_{\phi} {\bf s}_{\phi} }_{\text{inter-group interference}} + z_{\psi,k}.
    \end{align}
As previously mentioned, the inter-group interference\footnote{As a reminder, the term inter-group interference refers to the received signal component whose power is due to the information meant for users originating from other groups.} experienced by user ${\rm U}_{\psi,k}$, can be removed from~$y_{\psi,k}$ by exploiting that same user's cached content and that user's composite CSI $\{{\bf h}_{\psi,k}^T{\bf v}_{\phi,k'}\rho_\phi\}_{\phi\in\{\Psi\setminus\psi\},\, k'\in[Q]}$.
Then, after the cache-aided removal of this inter-group interference, the equivalent received signal at ${\rm U}_{\psi,k}$ is given by
\begin{align}\label{y_psi_k_eq_initial}
		y_{\psi,k}^\prime 
		= \frac{\rho_\psi}{\sqrt{G}} {\bf h}_{\psi,k}^T {\bf v}_{\psi,k} {s}_{\psi,k} + \underbrace{\frac{\rho_\psi}{\sqrt{G}} \sum\nolimits_{\vartheta=1,\vartheta \neq k}^{Q} {\bf h}_{\psi,k}^T {\bf v}_{\psi,\vartheta} {s}_{\psi,\vartheta}}_{\text{intra-group interference}} {} + z_{\psi,k}.
\end{align}
Consequently, the corresponding SINR for information decoding at ${\rm U}_{\psi,k}$, is given by
\begin{align}\label{SINR_actual_def}
    {\rm SINR}_{\psi,k} = \frac{\frac{\rho_\psi^2}{G} |{\bf h}_{\psi,k}^T {\bf v}_{\psi,k}|^2}{1+\frac{\rho_\psi^2}{G} \sum\nolimits_{\vartheta=1,\vartheta \neq k}^{Q} |{\bf h}_{\psi,k}^T {\bf v}_{\psi,\vartheta}|^2}.
\end{align}

On the other hand, in the cacheless case of $\gamma = 0$, the received signal $y_{k}	= \rho{\bf h}_{k}^T {\bf v}_{k} {s}_{k} + \rho \sum\nolimits_{\vartheta=1,\vartheta \neq k}^{Q} {\bf h}_{k}^T {\bf v}_{\vartheta} {s}_{\vartheta} + z_{k}$ at some user $k$ naturally carries no inter-group interference (as there are no other groups to simultaneously serve), and the SINR takes the standard form ${\rm SINR}_{k} = \frac{\rho^2|{\bf h}_{k}^T {\bf v}_{k}|^2}{1+\rho^2 \sum\nolimits_{\vartheta=1,\vartheta \neq k}^{Q} |{\bf h}_{k}^T {\bf v}_{\vartheta}|^2}.$

We will consider the MF, ZF and RZF linear precoding schemes, selected here for being very common, simple, as well as competitive in terms of rate performance~\cite{Peel,Paulraj}. 
  As is known, the corresponding precoding matrices ${\bf V}_\psi$ take the form:
\begin{align}\label{eq:precoders}
	{\bf V}_\psi = 
		\begin{cases}    
			 {\bf H}_\psi^H, & \text{ MF Precoder}\\
			{\bf H}_\psi^H \left( {\bf H}_\psi {\bf H}_\psi^H \right)^{-1}, & \text{ ZF Precoder} \\
			{\bf H}_\psi^H \left( {\bf H}_\psi {\bf H}_\psi^H + \alpha {\bf I}_{Q}\right)^{-1}, & \text{ RZF Precoder},
			\end{cases}
\end{align}
where ${\bf H}_\psi \triangleq \big[ {\bf h}_{\psi,1} | {\bf h}_{\psi,2} | \cdots | {\bf h}_{\psi,Q} \big]^T \in \mathbb{C}^{Q \times L}$ denotes the channel matrix for the channel between the BS to the $Q$ chosen users\footnote{As expected, $Q$ must always be no larger than $B$, while in the case of the ZF/RZF precoder this value must also be bounded as $Q\leq L$.} belonging to group $\psi \in \Psi$, and where $\alpha$ is the regularization factor of the RZF precoder\footnote{It is worth recalling that the RZF precoder reverts to the ZF precoder when $\alpha=0$, and to the MF precoder when $\alpha \to \infty$.}~\cite{Peel}. 
For simplicity we assume that $\alpha = L/P_t$, which is a commonly used assumption throughout the literature~\cite{Peel,Wagner,Hoydis}.

We will henceforth use the term $(G,Q)$-vector coded caching, to refer to the vector coded caching scheme when it serves $G$ groups with $Q$ users per group. We will also use the term \emph{MF-based $(G,Q)$-vector coded caching} to refer to the same scheme when the underlying precoder is MF, and similarly we will use ZF-based or RZF-based $(G,Q)$-vector coded caching, for the other two precoders.
Let us now formally define some important metrics of interest.

\begin{definition}\label{sum_rate_def}\vspace{-0.3cm}
\emph{(Average sum-rate and effective sum-rate).} For a $(G,Q)$-vector coded caching scheme, its average sum-rate is denoted by $\bar R(G,Q)$ and is defined as the total data-transmission rate (before accounting for CSI costs) summed over the $GQ$ simultaneously served users, and averaged over the fading.  Similarly, the effective average sum-rate $\mathcal{\bar R}(G,Q)$ will represent the corresponding average rate after through all CSI costs are duly accounted for. 
\end{definition}
\begin{definition}\label{effective_def}\vspace{-0.3cm}
\emph{(Effective gain over MISO).} For a given set of $L$ and SNR resources, and a fixed underlying precoder class, the effective gain, after accounting for CSI costs, of the $(G,Q)$-vector coded caching over the cacheless scenario (corresponding to $G = 1$, and an operating multiplexing gain $Q'$), will be denoted as
$\Gc(G, Q; 1,Q') \triangleq \frac{\mathcal{\bar R}(G,Q)}{\mathcal{\bar R}(1,Q')}$ in the form of the ratios of the~\emph{effective  rates}.
\end{definition}\vspace{-0.3cm}

  
\vspace{-0.2cm}\section{Analysis of the Average Rate and of the Effective Gain over MISO}\label{large_ana_sec}

In this section, we analyze the average sum-rates and the corresponding effective rates achieved by the cache-aided downlink schemes of Section~\ref{JSAC_Intro} for the MF, ZF and RZF linear precoders of interest. After doing so, we also report the effective gains offered by these $(G,Q)$-vector coded caching schemes, over the $(G=1,Q')$ cacheless equivalents.  

We will henceforth consider the ratio $c \triangleq Q /L$, while we will often use the notation $c' \triangleq Q' /L$ when referring explicitly to the cacheless equivalent. The two ratios can be chosen independently. When applying large matrix analysis, we will be assuming a fixed $c>0$ and a fixed $c'>0$. 

\vspace{-0.3cm}
\subsection{MF Precoding}
To derive the average sum-rate of vector coded caching with MF precoding, we first recall that the elements of ${\bf H}_\psi$ are i.i.d. Gaussian random variables with zero mean and unit variance, which implies that $\mathbb{E}\left\{ {\rm Tr} \left\{ {\bf H}_\psi  {\bf H}_\psi^H \right\} \right\}=L Q$ (cf.~\cite{Matthaiou}), which then implies that the power normalization factor $\rho_\psi$ takes the form
		$\rho_\psi = \sqrt{\frac{P_t}{ \mathbb{E} \left\{ {\rm Tr}\left\{ {\bf H}_\psi {\bf H}_\psi^H \right\} \right\}}} = \sqrt{\frac{P_t}{Q L}}$ (cf.~\cite{Feng}).
	This in turn yields (cf.~\eqref{eq:precoders},~\eqref{eq:transmitSignal2}) a transmitted signal of the form
	\begin{align}\label{eq:transmitSignal_mf}
			{\bf x}_\Psi =
				\sqrt{\frac{P_t}{G Q L} } \sum\limits_{\psi \in \Psi}  {\bf H}_\psi^H {\bf s}_{\psi} = \sqrt{\frac{P_t}{G Q L} } \sum\limits_{\psi \in \Psi} \sum\limits_{\vartheta=1}^Q {\bf h}_{\psi,\vartheta}^* s_{\psi,\vartheta}.
		\end{align}
The corresponding average sum-rate is presented below. 


\begin{theorem}\label{Tight_Bound_MF_coro}
     For any given $P_t$ and $c=Q/L$, the average sum-rate~$\bar R^{\rm MF}$ of the MF-based $(G,cL)$-vector coded caching scheme in the large $L$ regime, takes the form
    \begin{align}\label{R_MF_sum_caseii}
        \bar{R}^{\rm MF}(G,cL) \doteq c\  G L   \ln \left( 1+ \frac{1}{c} \frac{P_t }{P_t  + G } \right).
    \end{align}
\end{theorem}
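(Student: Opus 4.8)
The plan is to evaluate the post-cancellation SINR in \eqref{SINR_actual_def} under the MF choice $\mathbf{v}_{\psi,\vartheta}=\mathbf{h}_{\psi,\vartheta}^*$ and $\rho_\psi^2=P_t/(QL)$ (as in \eqref{eq:transmitSignal_mf}), to obtain its deterministic equivalent in the large-$L$ regime, and finally to multiply the resulting per-user rate by the number $GQ=GcL$ of users served simultaneously. First I would substitute the precoder into \eqref{SINR_actual_def}. Using $\mathbf{h}_{\psi,k}^T\mathbf{v}_{\psi,k}=\mathbf{h}_{\psi,k}^T\mathbf{h}_{\psi,k}^*=\|\mathbf{h}_{\psi,k}\|^2$ and $\mathbf{h}_{\psi,k}^T\mathbf{v}_{\psi,\vartheta}=\mathbf{h}_{\psi,k}^T\mathbf{h}_{\psi,\vartheta}^*$, the SINR becomes
\begin{align}
\mathrm{SINR}_{\psi,k}=\frac{\frac{P_t}{GQL}\|\mathbf{h}_{\psi,k}\|^4}{1+\frac{P_t}{GQL}\sum_{\vartheta=1,\vartheta\neq k}^{Q}\big|\mathbf{h}_{\psi,k}^T\mathbf{h}_{\psi,\vartheta}^*\big|^2}. \nonumber
\end{align}

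The key step is then to replace the $L^2$-normalized numerator and interference terms by their almost-sure limits. For the numerator, the strong law of large numbers gives $\tfrac{1}{L}\|\mathbf{h}_{\psi,k}\|^2\stackrel{a.s.}{\longrightarrow}1$, hence $\tfrac{1}{L^2}\|\mathbf{h}_{\psi,k}\|^4\stackrel{a.s.}{\longrightarrow}1$. For the interference, I would condition on $\mathbf{h}_{\psi,k}$: each cross term $\mathbf{h}_{\psi,k}^T\mathbf{h}_{\psi,\vartheta}^*$ is then a zero-mean complex Gaussian of variance $\|\mathbf{h}_{\psi,k}\|^2$, so that $\mathbb{E}\{|\mathbf{h}_{\psi,k}^T\mathbf{h}_{\psi,\vartheta}^*|^2\mid\mathbf{h}_{\psi,k}\}=\|\mathbf{h}_{\psi,k}\|^2$; applying the strong law over the $Q-1$ conditionally-i.i.d.\ terms yields $\tfrac{1}{Q-1}\sum_{\vartheta\neq k}|\mathbf{h}_{\psi,k}^T\mathbf{h}_{\psi,\vartheta}^*|^2\stackrel{a.s.}{\longrightarrow}\|\mathbf{h}_{\psi,k}\|^2$. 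Since $Q=cL$, this gives $\tfrac{1}{L^2}\sum_{\vartheta\neq k}|\mathbf{h}_{\psi,k}^T\mathbf{h}_{\psi,\vartheta}^*|^2\stackrel{a.s.}{\longrightarrow}c$. A standard trace lemma for quadratic forms (Bai--Silverstein type) delivers these $\tfrac{1}{L}$-normalized limits cleanly.

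Combining these limits in the ratio (the denominator being bounded away from zero, so the continuous-mapping theorem applies, after factoring $\tfrac{P_t}{GQL}=\tfrac{P_t}{GcL^2}$) gives the deterministic SINR
\begin{align}
\mathring{\mathrm{SINR}}=\frac{\frac{P_t}{Gc}\cdot 1}{1+\frac{P_t}{Gc}\cdot c}=\frac{P_t/(Gc)}{1+P_t/G}=\frac{1}{c}\frac{P_t}{P_t+G}, \nonumber
\end{align}
so that $\ln(1+\mathrm{SINR}_{\psi,k})\stackrel{a.s.}{\longrightarrow}\ln(1+\mathring{\mathrm{SINR}})$. Since by symmetry all $GQ$ served users are statistically identical, summing the per-user rates yields $\bar{R}^{\rm MF}(G,cL)\doteq GQ\,\ln(1+\mathring{\mathrm{SINR}})=cGL\,\ln\!\big(1+\tfrac{1}{c}\tfrac{P_t}{P_t+G}\big)$, which is the claim.

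The main obstacle I anticipate is the rigorous passage from the almost-sure convergence of $\mathrm{SINR}_{\psi,k}$ to the convergence of the \emph{averaged} rate, since $\bar{R}^{\rm MF}$ is an expectation of $\ln(1+\mathrm{SINR}_{\psi,k})$. One must justify exchanging limit and expectation (e.g.\ via uniform integrability or dominated convergence, using that $\ln(1+\mathrm{SINR}_{\psi,k})$ has moments bounded uniformly in $L$), and verify that the $(Q-1)\mapsto Q$ replacement together with the fluctuations of the individual quadratic forms contribute only lower-order terms that vanish after the $L^2$-normalization. These concentration-of-measure details, rather than the algebra, are where the real work lies.
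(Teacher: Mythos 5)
Your argument is sound and reaches the correct deterministic SINR, but it takes a genuinely different route from the paper. You establish almost-sure convergence of the random $\mathrm{SINR}_{\psi,k}$ (via the law of large numbers for $\tfrac{1}{L}\|\mathbf{h}_{\psi,k}\|^2$ and a Bai--Silverstein trace-lemma argument for the interference quadratic form $\tfrac{1}{L}\mathbf{h}_{\psi,k}^T\bigl(\tfrac{1}{L}\mathbf{H}_{\psi,-k}^H\mathbf{H}_{\psi,-k}\bigr)\mathbf{h}_{\psi,k}^*$), then invoke the continuous-mapping theorem and finally face the limit--expectation exchange for $\mathbb{E}\{\ln(1+\mathrm{SINR})\}$, which you correctly flag as the real work. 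The paper never passes through almost-sure convergence of the SINR at all: it defines $X\triangleq\tfrac{P_t}{GcL^2}|\mathbf{h}_{\psi,k}^T\mathbf{h}_{\psi,k}^*|^2$ and $Y\triangleq 1+\tfrac{1}{Q}\sum_{\vartheta\neq k}Y_\vartheta$, computes their \emph{exact} finite-$L$ means and variances, sandwiches both $\mathbb{E}\{\ln(1+X/Y)\}$ and $\ln(1+\mathbb{E}\{X\}/\mathbb{E}\{Y\})$ between the same two Jensen bounds, and shows the gap $\Delta=\ln\bigl[\mathbb{E}\{X+Y\}\mathbb{E}\{(X+Y)^{-1}\}\cdot\mathbb{E}\{Y\}\mathbb{E}\{Y^{-1}\}\bigr]$ vanishes as $Q=cL\to\infty$ by Taylor-expanding $Y^{-1}$ about $\mathbb{E}\{Y\}$ and applying dominated convergence. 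The paper's route thus works with expectations from the outset and sidesteps the uniform-integrability step you would need to supply (your sketch leaves it as a remark rather than carrying it out, e.g.\ by bounding $\mathrm{SINR}_{\psi,k}\le\tfrac{P_t}{GcL^2}\|\mathbf{h}_{\psi,k}\|^4$ and checking a uniform second moment of the log); in exchange, your route is more modular and delivers the stronger almost-sure statement about the SINR itself, whereas the paper's Jensen-gap device is tailored specifically to the averaged rate. Both yield the same constants, since $\mathbb{E}\{X\}/\mathbb{E}\{Y\}\to\tfrac{1}{c}\tfrac{P_t}{P_t+G}$ matches your $\mathring{\mathrm{SINR}}$.
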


\begin{proof}
The proof can be found in Appendix~\ref{Proof_Thm1_Zhang}.
\end{proof}

The following directly distills the above result to the cacheless case\footnote{It is worth noting that while there have been various works (cf.~\cite{Hien,Hoydis,Yeon_Geun_Lim}) analyzing the MF sum-rate in traditional massive MIMO systems, the result derived in this work here entails less assumptions. For example, focusing on the large-$L$ regime, the result in~\cite{Hien} directly assumes a tight Jensen's bound, while the result in~\cite{Yeon_Geun_Lim} is under a so-called ``near deterministic'' assumption in low/high SNRs. On the other hand, our method here draws from the uplink analysis in~\cite{Zhang}, and only employs a large-$L$ assumption to derive the exact asymptotic optimality for any value of SNR.}. 
\begin{coro}\label{cor:EffectiveRatesCacheless}
     In the limit of large $L$, and for any fixed $P_t$ and $c'$, the average sum-rate of the (traditional, cacheless) MF-based MISO BC with $c' L$ streams takes the form
    \begin{align}\label{R_MF_sum_caseiiCacheLess}
        \bar{R}^{\rm MF}(1,c' L) \doteq c' \  L   \ln \left( 1+ \frac{1}{c'} \frac{P_t }{P_t  + 1 } \right).
    \end{align}
\end{coro}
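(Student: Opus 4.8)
The plan is to obtain this statement as an immediate specialization of Theorem~\ref{Tight_Bound_MF_coro}, rather than repeating the random-matrix argument from scratch. The key observation, already anticipated in the Remark following~\eqref{eq:transmitSignal2}, is that the traditional cacheless MISO BC is exactly the $G=1$ instance of the $(G,Q)$-vector coded caching framework: when $\gamma=0$ we have $\Lambda\gamma+1=1$, so a single user-group is served, there is no inter-group interference, and the power normalization factor reduces to $\rho_\psi=\rho=\sqrt{P_t/(QL)}$.

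First I would verify that the per-user SINR of the general scheme collapses to the cacheless SINR under $G=1$. Setting $G=1$ in~\eqref{SINR_actual_def} removes the $1/G$ factors in both the numerator and the denominator, yielding precisely the cacheless expression ${\rm SINR}_{k}=\rho^2|{\bf h}_{k}^T{\bf v}_{k}|^2/(1+\rho^2\sum_{\vartheta\neq k}|{\bf h}_{k}^T{\bf v}_{\vartheta}|^2)$ reported just below~\eqref{SINR_actual_def}. Since the average sum-rate in Definition~\ref{sum_rate_def} is assembled from these SINRs (summed over the $GQ$ served streams and averaged over the fading), the large-$L$ deterministic equivalent of $\bar R^{\rm MF}$ in the cacheless system must coincide with the $G=1$ evaluation of~\eqref{R_MF_sum_caseii}.

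Then I would simply substitute $G=1$ and relabel the multiplexing ratio $c\mapsto c'$ (equivalently $Q\mapsto Q'=c'L$) in~\eqref{R_MF_sum_caseii}. This gives $\bar R^{\rm MF}(1,c'L)\doteq c'L\ln(1+\tfrac{1}{c'}\tfrac{P_t}{P_t+1})$, which is exactly~\eqref{R_MF_sum_caseiiCacheLess}. The only bookkeeping point is that the two ratios $c$ and $c'$ are chosen independently, as emphasized at the start of Section~\ref{large_ana_sec}, so the substitution $c\to c'$ is legitimate and carries no hidden coupling.

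There is no genuine obstacle here: the entire content of the corollary is already contained in Theorem~\ref{Tight_Bound_MF_coro}, and what remains is merely to confirm the consistency of the $G=1$ reduction. If I wished to be fully self-contained rather than invoking the theorem as a black box, the only mildly delicate point would be to confirm that the deterministic-equivalent derivation of Appendix~\ref{Proof_Thm1_Zhang} remains valid term-by-term at $G=1$; this it does, since that argument treats $G$ as a fixed constant throughout and never divides by $G-1$ or otherwise degenerates at the boundary value.
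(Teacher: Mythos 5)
Your proposal is correct and matches the paper's own treatment: the paper states that the corollary ``directly distills'' Theorem~\ref{Tight_Bound_MF_coro} to the cacheless case, which is exactly your substitution of $G=1$ and relabeling $c\mapsto c'$ in~\eqref{R_MF_sum_caseii}. Your additional check that the SINR and the appendix derivation do not degenerate at $G=1$ is sound but not something the paper bothers to spell out.
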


\vspace{-0.5cm}
\subsection{ZF Precoding}
		Moving now to the case of ZF-based vector coded caching, and focusing again on a set of groups $\Psi$ and on the transmission stage corresponding to some group $\psi\in \Psi$, we have that the power control factor takes the form
				$\rho_\psi^2  = 
				{\frac{P_t}{\mathbb{E} \{  {\rm Tr} \{ ( {\bf H}_\psi {\bf H}_\psi^H )^{-1}  \} \}}}$,
		while the transmitted signal from~\eqref{eq:transmitSignal2} becomes
			\begin{align}
			{\bf x}_\Psi 
			= \frac{1}{\sqrt{G}} \sum\limits_{\psi\in\Psi} \rho_\psi {\bf H}_\psi^H \left( {\bf H}_\psi {\bf H}_\psi^H \right)^{-1} {\bf s}_{\psi}.
			\end{align}
		This in turn yields a received signal at user ${\rm U}_{\psi,k}$ which --- after the cache-aided removal of the inter-group interference (cf.~\eqref{eq:def_intergroup_int}) --- takes the form
		\begin{align}
		y_{\psi,k}^\prime &=  \frac{1}{\sqrt{G}}  \rho_\psi {\bf h}_{\psi,k}^T {\bf H}_\psi^H \left( {\bf H}_\psi {\bf H}_\psi^H \right)^{-1} {\bf s}_{\psi} + z_{\psi,k} \notag\\
		&= \frac{1}{\sqrt{G}}  \rho_\psi \left( {\bf 1}_{k}^T {\bf H}_\psi \right) {\bf H}_\psi^H \left( {\bf H}_\psi {\bf H}_\psi^H \right)^{-1} {\bf s}_{\psi} + z_{\psi,k} 
		= \frac{1}{\sqrt{G}} \rho_\psi s_{\psi,k} + z_{\psi,k},
		\end{align}
		where   ${\bf 1}_k \in \mathbb{C}^{Q \times 1}$ denotes the vector whose components are all zero except for the $k$-th element, which equals~$1$.  
		After then considering that all intra-group interference is canceled by means of ZF precoding, we can write the SINR at user ${\rm U}_{\psi,k}$ as 
			\begin{align}\label{SNR_ZF_def}
				{\rm SINR}_{\psi,k}^{\rm ZF} = 
				\frac{P_t}{G\, \mathbb{E} \big\{ {\rm Tr}\big\{\left( {\bf H}_\psi {\bf H}_\psi^H \right)^{-1} \big\} \big\} }.
			\end{align}
		With this in place, we proceed with the following theorem. 
		\vspace{-0.3cm}
			\begin{theorem}\label{Rate_ZF_lemma}
        {{For $c = \frac{Q}{L} \in(0,1)$}}, the average sum-rate~$\bar R_{\rm sum}^{\rm ZF}$ of the ZF-based $(G,Q)$-vector coded caching scheme, takes the form
				    \begin{align}\label{ZF_sumrate}
					     \bar{R}^{\rm ZF}(G,Q) = QG \ln \left( 1+ \frac{P_t}{G} \left( \frac{1}{c} -1 \right) \right).    
					\end{align}
			\end{theorem}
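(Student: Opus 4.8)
The plan is to observe that essentially all of the work has already been front-loaded into the SINR expression~\eqref{SNR_ZF_def}, so that the proof collapses to a single Wishart-matrix moment. The crucial structural feature is that ZF precoding annihilates the intra-group interference completely --- indeed, since ${\bf h}_{\psi,k}^T={\bf 1}_k^T{\bf H}_\psi$ one gets ${\bf h}_{\psi,k}^T{\bf V}_\psi={\bf 1}_k^T$, as already used when reducing $y_{\psi,k}^\prime$ to $\frac{1}{\sqrt G}\rho_\psi s_{\psi,k}+z_{\psi,k}$ --- while the inter-group interference was removed earlier via the cached content. Under the average power normalization adopted in the model, $\rho_\psi$ is a deterministic constant, so the resulting per-user SINR carries no residual randomness and the whole task reduces to evaluating $\mathbb{E}\big\{{\rm Tr}\big\{({\bf H}_\psi{\bf H}_\psi^H)^{-1}\big\}\big\}$.

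First I would note that, since ${\bf H}_\psi\in\mathbb{C}^{Q\times L}$ has i.i.d. zero-mean, unit-variance complex Gaussian entries, the Gram matrix ${\bf H}_\psi{\bf H}_\psi^H$ is a central complex Wishart matrix. For $L>Q$ --- precisely the regime guaranteed by the hypothesis $c=Q/L\in(0,1)$ --- its inverse has a finite first moment, and the standard inverse-Wishart identity gives $\mathbb{E}\big\{({\bf H}_\psi{\bf H}_\psi^H)^{-1}\big\}=\frac{1}{L-Q}{\bf I}_Q$, whence $\mathbb{E}\big\{{\rm Tr}\big\{({\bf H}_\psi{\bf H}_\psi^H)^{-1}\big\}\big\}=\frac{Q}{L-Q}$. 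Substituting into~\eqref{SNR_ZF_def} yields, identically for every served user,
\[
{\rm SINR}_{\psi,k}^{\rm ZF}=\frac{P_t}{G}\cdot\frac{L-Q}{Q}=\frac{P_t}{G}\left(\frac{1}{c}-1\right).
\]
Since this value is the same across all $GQ$ simultaneously served users and is deterministic, the average sum-rate is merely $GQ$ copies of $\ln\big(1+{\rm SINR}_{\psi,k}^{\rm ZF}\big)$, which delivers~\eqref{ZF_sumrate} at once.

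The one point I expect to require care is the inverse-Wishart moment. Although textbook, its validity hinges on the strict inequality $Q<L$ (equivalently $c<1$): at $c=1$ the expectation diverges, mirroring the familiar noise-enhancement blow-up of ZF as the stream count approaches the antenna count, and this is exactly why the hypothesis confines $c$ to the open interval $(0,1)$. It is also worth emphasizing that, unlike the MF result of Theorem~\ref{Tight_Bound_MF_coro} which holds only asymptotically (hence the $\doteq$), the present ZF expression is \emph{exact} for every finite $L$, precisely because ZF combined with average power normalization leaves a deterministic, interference-free SINR.
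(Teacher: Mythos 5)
Your proposal is correct and follows essentially the same route as the paper: reduce everything to the inverse-Wishart moment $\mathbb{E}\{{\rm Tr}\{({\bf H}_\psi{\bf H}_\psi^H)^{-1}\}\}=\frac{Q}{L-Q}$ for $L>Q$, note the resulting SINR is deterministic and identical across the $GQ$ served users, and sum. Your added remarks on the divergence at $c=1$ and on the exactness (as opposed to the asymptotic nature of the MF result) are accurate but do not change the argument.
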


		\begin{proof}
			Directly from~\cite{Tulino_matrix}, and from the fact that ${\bf H}_\psi {\bf H}_\psi^H$ is a Wishart matrix with $L$ degrees of freedom, we know that $\mathbb{E}\big\{ {\rm Tr}\big\{\big( {\bf H}_\psi {\bf H}_\psi^H \big)^{-1} \big\} \big\} = \frac{Q}{L-Q}$ for $L>Q$. 
			Naturally, ${\rm SINR}_{\psi,k}^{\rm ZF}$ is deterministic and constant across all simultaneously served users. {{By summing the average rate of each of the  $G Q$ served users, we obtain~\eqref{ZF_sumrate}.}}
		\end{proof}

\vspace{-0.5cm}
\subsection{RZF Precoding}
We finally consider our third precoder, and do so in the asymptotic regime of large $L$ and fixed $c$. 
	We first note that the received signal at  ${\rm U}_{\psi,k}$ --- after cache-aided removal of the inter-group interference --- takes the form
    \begin{align}\label{y_RZF_Exp}
		y'_{\psi,k} &= 
		 \frac{\rho_\psi}{\sqrt{G}} \sum_{\vartheta=1}^{Q} {\bf h}_{\psi,k}^T \left(\alpha {\bf I}_L + {\bf H}^H_\psi {\bf H}_\psi \right)^{-1} {\bf h}_{\psi,\vartheta}^* { s}_{\psi,\vartheta} + z_{\psi,k}.
    \end{align}
    For ${\bf H}_{\psi,-k}$ denoting the matrix resulting from  ${\bf H}_{\psi}$ after removing its $k$-th row, we proceed to define 
		\begin{align}
            &A_{\psi,k} \triangleq {\bf h}_{\psi,k}^T \Big( \alpha {\bf I}_L + {\bf H}_{\psi,-k}^H {\bf H}_{\psi,-k}\Big)^{-1} {\bf h}_{\psi,k}^* \label{def_a}, \\
			&B_{\psi,k} \triangleq {\bf h}_{\psi,k}^T \left(\alpha {\bf I}_L + {\bf H}_{\psi,-k}^H {\bf H}_{\psi,-k} \right)^{-1}{\bf H}_{\psi,-k}^H {\bf H}_{\psi,-k} \big(\alpha {\bf I}_L + {\bf H}_{\psi,-k}^H {\bf H}_{\psi,-k}\big)^{-1} {\bf h}_{\psi,k}^*.\label{def_b}
		\end{align}
	With these in place, we can now derive the SINR at user ${\rm U}_{\psi,k}$ to be	
		\begin{align}\label{SINR_L1_def}
			{\rm SINR}_{\psi,k}^{\rm RZF} 
			=\frac{{A}_{\psi,k}^2\frac{{\rho}_\psi^2}{G} }{\big(1+{A}_{\psi,k} \big)^2+\frac{{\rho}_\psi^2}{G}{B}_{\psi,k}},
		\end{align}
	where the proof of~\eqref{SINR_L1_def} is relegated to Appendix~\ref{app_sinr}. 
	
	We can now present the asymptotic deterministic equivalent of the sum-rate of our proposed scheme when RZF is applied.
	We recall that in the limit of large $L$, the deterministic value $\mathring{X}$ represents the \emph{asymptotic deterministic equivalent} of $X$ if $X \stackrel{a.s.}{\longrightarrow} \mathring{X}$.

	\begin{theorem}\label{Rate_RZF_lemma}
		In the large-$L$ regime with fixed $c = Q/L$, the average sum-rate $\bR^{\rm RZF}$ of RZF-based $(G,Q)$-vector coded caching, takes the form 
			\begin{align}\label{Rate_RZF_eq}
			\bR^{\rm RZF}(G,Q)\doteq \mathring{R}^{\rm RZF}(G,cL)
			\triangleq c\, G L \ln \left(1+ \frac{a_{\psi,k}^2 p_\psi^2 / G}{\big(1+a_{\psi,k} \big)^2+P_t/G}\right),     
			\end{align}
		where $\mathring{R}^{\rm RZF}$ is the deterministic equivalent of $\bR^{\rm RZF}$,\footnote{This entails a small abuse of terminology, as it is $\mathring{R}^{\rm RZF}/L$ that is the deterministic equivalent of $\bR^{\rm RZF}/L$.} and where  
				\begin{align}
					a_{\psi,k} & \triangleq \ \  \frac{1}{2}  \left[  \sqrt{ (1-c)^2 P_t^2 + 2 (1+c) P_t +1} +  (1-c)P_t -1  \right],\hspace{7ex} \label{A_psi_K_deter} \\
					 p_\psi^2 &\triangleq  \ \ \frac{P_t}{a_{\psi,k} - \frac{P_t}{2}\Big(\frac{P_t(c-1)^2+c+1}{\sqrt{P_t^2(c-1)^2 + 2(c+1)P_t+1}} +  {1-c}\Big)}.\label{rho_psi_K_deter}
				\end{align}			
	\end{theorem}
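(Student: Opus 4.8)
The plan is to reduce the theorem to finding the almost-sure deterministic equivalents of the three random objects that appear in the SINR expression \eqref{SINR_L1_def}, namely $A_{\psi,k}$, $B_{\psi,k}$, and the power-normalization factor $\rho_\psi^2 = P_t/\mathbb{E}\{{\rm Tr}\{{\bf V}_\psi^H {\bf V}_\psi\}\}$, and then to pass the limit through the continuous map $x\mapsto\ln(1+x)$. By the i.i.d. Rayleigh assumption the asymptotic SINR is identical across all $GQ=cGL$ simultaneously served users, so once the per-user SINR is shown to converge, the averaged sum-rate is simply $cGL$ times $\ln(1+\text{SINR}^{\circ})$, which is exactly the claimed form \eqref{Rate_RZF_eq}. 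Throughout I write $W\triangleq{\bf H}_{\psi,-k}^H{\bf H}_{\psi,-k}$ and $M\triangleq(\alpha{\bf I}_L+W)^{-1}$ with $\alpha=L/P_t$, and I exploit that ${\bf h}_{\psi,k}$ is independent of $M$, since row $k$ has been removed.

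For $A_{\psi,k}$ in \eqref{def_a}, I would first apply the quadratic-form concentration (trace/Bai--Silverstein) lemma: because ${\bf h}_{\psi,k}$ has i.i.d. unit-variance entries and is independent of $M$, we have $A_{\psi,k}-{\rm Tr}(M)\stackrel{\text{a.s.}}{\longrightarrow}0$. With the scaling $\alpha=L/P_t$ the remaining trace is $O(1)$ and, by the Marchenko--Pastur/Silverstein theory for the sample-covariance matrix $W$, its deterministic equivalent $a_{\psi,k}$ solves the fixed-point equation $a=\big(\tfrac{1}{P_t}+\tfrac{c}{1+a}\big)^{-1}$, i.e. the quadratic $a^2+\big(1-(1-c)P_t\big)a-P_t=0$. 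Taking the positive root yields precisely \eqref{A_psi_K_deter}.

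For $B_{\psi,k}$ in \eqref{def_b}, the key is the algebraic identity $MWM=M-\alpha M^2$, which gives $B_{\psi,k}=A_{\psi,k}-\alpha\,{\bf h}_{\psi,k}^T M^2{\bf h}_{\psi,k}^*$. Applying the trace lemma again and using ${\rm Tr}(M^2)=-\partial_\alpha{\rm Tr}(M)$ reduces the second-order resolvent to the derivative of the already-known first-order deterministic equivalent; differentiating the fixed point produces the closed form for the deterministic equivalent $b_{\psi,k}$ of $B_{\psi,k}$. For $\rho_\psi^2$, I would expand ${\rm Tr}\{{\bf V}_\psi^H{\bf V}_\psi\}={\rm Tr}\{{\bf H}_\psi^H{\bf H}_\psi(\alpha{\bf I}_L+{\bf H}_\psi^H{\bf H}_\psi)^{-2}\}$ (using \eqref{eq:precoders}), which is the very same first-plus-second-order resolvent combination evaluated on the full channel; since the deterministic equivalent is insensitive to the removal of a single row, it coincides with $b_{\psi,k}$, so that $\rho_\psi^2\to p_\psi^2=P_t/b_{\psi,k}$, matching \eqref{rho_psi_K_deter}. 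The crucial consequence is the energy-conservation simplification $\frac{\rho_\psi^2}{G}B_{\psi,k}\stackrel{\text{a.s.}}{\longrightarrow}\frac{P_t}{G}$, which collapses the denominator of \eqref{SINR_L1_def} to $(1+a_{\psi,k})^2+P_t/G$ and, together with the numerator $A_{\psi,k}^2\rho_\psi^2/G\to a_{\psi,k}^2 p_\psi^2/G$, reproduces the argument of the logarithm in \eqref{Rate_RZF_eq}.

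The main obstacle is the rigorous random-matrix bookkeeping rather than the final algebra. Specifically, I expect the delicate points to be: (i) justifying the decoupling of ${\bf h}_{\psi,k}$ from the resolvent and the uniform control of the spectral norm of $M$ needed for the trace lemma and for the rank-one perturbation argument; (ii) handling the second-order resolvent ${\rm Tr}(M^2)$ rigorously, either by a dedicated deterministic equivalent or by justifying the interchange of differentiation and limit in $-\partial_\alpha{\rm Tr}(M)$; and (iii) upgrading the per-user almost-sure SINR convergence to convergence of the fading-averaged sum-rate in Definition~\ref{sum_rate_def}, which requires a uniform-integrability (dominated-convergence) argument so that the $\stackrel{\text{a.s.}}{\longrightarrow}$ limit also governs the expectation up to $o(1)$, consistent with the $\doteq$ in the statement.
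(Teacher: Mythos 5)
Your proposal follows essentially the same route as the paper's Appendix~B: decompose the SINR into $A_{\psi,k}$, $B_{\psi,k}$ and $\rho_\psi^2$, apply the trace and rank-one perturbation lemmas, reduce the second-order resolvent to $-\partial_z$ of the first-order one, exploit $p_\psi^2\, b_{\psi,k}=P_t$ to collapse the denominator, and finish with the continuous mapping theorem. The only cosmetic difference is that you derive $a_{\psi,k}$ from the Silverstein fixed-point quadratic (whose positive root indeed reproduces \eqref{A_psi_K_deter}) whereas the paper cites the closed-form Stieltjes transform $S_c(z)$ directly, and you additionally flag the expectation-versus-almost-sure interchange, which the paper leaves implicit.
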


	\begin{proof}
	The proof is based on the derivation of the asymptotic deterministic equivalent of the SINR, and it is presented in Appendix~\ref{proof_theo_asymp}.
	\end{proof}

\subsection{Accounting for the CSI Costs}
	
To account for the cost of CSI acquisition under TDD, we consider a basic CSI-acquisition effort where at the beginning of each transmission stage, the $G Q$ served users send uplink orthogonal pilot symbols, from which the BS can estimate the downlink channel matrix, under the assumption of channel reciprocity. Then the CSI-acquisition process engages downlink training, of similar complexity, in order to communicate the composite CSI that here allows our receivers to perform cache-aided cancellation of the inter-group interference (cf.~\eqref{eq:def_intergroup_int}) from their signal\footnote{This acquisition process for gathering composite CSI, with the same aforementioned complexity per served user, is standard in a variety of traditional communications techniques such as SIC-based approaches. For additional details, please see~\cite{lampiris2021_Bottleneck}.}. To account for this CSI-acquisition overhead, we directly extend the commonly-used approach in~\cite{Kobayashi2012, Sadeghi2018_multicast}, that easily allows us to calculate the effective average sum-rate (cf. Definition~\ref{sum_rate_def}) for each precoder $i\in\{$MF, ZF, RZF$\}$, to be
    \begin{align}\label{Rate_CSI_Def}
     {\mathcal{\bar R}}^i  = \left( 1 - \frac{\beta_{\rm tot} G Q}{T_c W_c} \right)  \bar R^i
     = \left( 1 - c \zeta_{G,Q} \right) \bar R^i,
    \end{align}
    where $\beta_{\rm tot}$ is the number of resources per user and per block used for pilot transmission,  $\bar R^i$ is the previously calculated average sum-rate before accounting for CSI costs, where $T_c$ and $W_c$ are the coherence time and coherence bandwidth, respectively, and where $\zeta_{G,Q} \triangleq \frac{\beta_{\rm tot} G L}{T_c W_c}$. For completeness we report the effective rates in the following corollary. The proof is direct as it merely involves applying \eqref{Rate_CSI_Def} in the expressions from Theorems~\ref{Tight_Bound_MF_coro}-\ref{Rate_RZF_lemma}. We recall that $a_{\psi,k}$ and $p_\psi$ are defined in Theorem~\ref{Rate_RZF_lemma}.
    
\begin{coro}\label{cor:EffectiveRates3}
    The effective rates of the proposed vector coded caching schemes under MF, ZF and RZF precoding, respectively take the form
    \begin{align}\label{effect_gain_MF_eq}
        {\mathcal{\bar R}}^{\text{MF}}(G,Q)  & \doteq \left( 1 - c \zeta_{G,Q} \right)  c\  G L   \ln \left( 1+ \frac{1}{c} \frac{P_t }{P_t  + G }\right), \\
        {\mathcal{\bar R}}^{\text{ZF}}(G,Q)  & = \left( 1 - c \zeta_{G,Q} \right)  QG \ln \left( 1+ \frac{P_t}{G} \left( \frac{1}{c} -1 \right) \right), \\
        {\mathcal{\bar R}}^{\text{RZF}}(G,Q)  & \doteq  \left( 1 - c \zeta_{G,Q} \right)  c\, G L \ln \left(1+ \frac{a_{\psi,k}^2 p_\psi^2 / G}{\big(1+a_{\psi,k} \big)^2+P_t/G}\right).
    \end{align} 
\end{coro}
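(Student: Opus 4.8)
The plan is to obtain each effective rate by directly discounting the corresponding average sum-rate of Theorems~\ref{Tight_Bound_MF_coro},~\ref{Rate_ZF_lemma} and~\ref{Rate_RZF_lemma} by the fraction of the coherence block that is consumed by CSI acquisition, exactly as prescribed by the effective-rate model in~\eqref{Rate_CSI_Def} and by Definition~\ref{sum_rate_def}. First I would fix a single transmission stage, which simultaneously serves $GQ$ users, and recall that each of these users spends $\beta_{\rm tot}$ resources on uplink pilot transmission together with an equal-complexity downlink-training effort for the composite CSI that enables the cache-aided cancellation. Out of the $T_c W_c$ channel uses available in one coherence block, this amounts to $\beta_{\rm tot} G Q$ reserved resources, so that only a fraction $1-\tfrac{\beta_{\rm tot}GQ}{T_cW_c}$ remains for data. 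Substituting $Q=cL$ and recalling $\zeta_{G,Q}\triangleq \tfrac{\beta_{\rm tot}GL}{T_cW_c}$ converts this factor into $1-c\zeta_{G,Q}$, which is precisely the pre-log multiplier appearing in~\eqref{Rate_CSI_Def}.

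With this factor in hand, the remaining step is pure substitution. For MF I would insert the expression for $\bar R^{\rm MF}(G,Q)$ from~\eqref{R_MF_sum_caseii} into~\eqref{Rate_CSI_Def}; for ZF I would insert~\eqref{ZF_sumrate}; and for RZF I would insert the deterministic equivalent~\eqref{Rate_RZF_eq}, keeping $a_{\psi,k}$ and $p_\psi$ as defined in Theorem~\ref{Rate_RZF_lemma}. This immediately yields the three claimed expressions for $\mathcal{\bar R}^{\rm MF}$, $\mathcal{\bar R}^{\rm ZF}$ and $\mathcal{\bar R}^{\rm RZF}$, since no additional channel statistics enter: all the heavy random-matrix machinery has already been carried out in the three source theorems.

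The only point requiring care, and the step I expect to be the main (if mild) obstacle, is the correct bookkeeping of the asymptotic-equivalence symbols. For ZF the average sum-rate is an exact expression, so multiplying by the deterministic factor $1-c\zeta_{G,Q}$ yields an exact equality, consistent with the use of $=$ in the ZF line. For MF and RZF the underlying relations are asymptotic ($\doteq$, valid up to a vanishing per-antenna error as $L\to\infty$): here I would note that $1-c\zeta_{G,Q}$ is a deterministic, channel-independent scalar, so multiplying both sides of $\bar R^i \doteq \mathring R^i$ by it preserves the relation and lets me write $\mathcal{\bar R}^i \doteq (1-c\zeta_{G,Q})\mathring R^i$. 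In particular, because the overhead factor does not depend on the fading realization, the almost-sure convergence established for the RZF sum-rate in Theorem~\ref{Rate_RZF_lemma} carries over verbatim to $\mathcal{\bar R}^{\rm RZF}$, which completes the argument.
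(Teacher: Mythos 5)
Your proposal is correct and follows exactly the same route as the paper, which simply applies the overhead model of~\eqref{Rate_CSI_Def} (with $c\zeta_{G,Q}=\beta_{\rm tot}GQ/(T_cW_c)$) to the sum-rate expressions of Theorems~\ref{Tight_Bound_MF_coro}--\ref{Rate_RZF_lemma}. Your extra care about preserving $\doteq$ versus $=$ and the almost-sure convergence under multiplication by the deterministic factor $1-c\zeta_{G,Q}$ is sound, though the paper leaves it implicit.
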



\subsection{Effective Gains over Cacheless MISO Systems}
At this point, with Theorems~\ref{Tight_Bound_MF_coro},~\ref{Rate_ZF_lemma},~\ref{Rate_RZF_lemma} in place, and in conjunction with Corollary~\ref{cor:EffectiveRates3}, we can directly report the effective gains over cacheless MISO. For each of the three precoder classes, MF, ZF, and RZF, and for a fixed set of antenna and SNR resources, we will be reporting the effective gain 
$\Gc(G, Q; 1,Q') = \frac{\mathcal{\bar R}(G,Q)}{\mathcal{\bar R}(1,Q')}$ (cf.~Definition~\ref{effective_def}) of the $(G,Q)$-vector coded caching schemes, over the cacheless scenario ($G = 1$) with some chosen number of streams $Q'$.
These effective gains are collected together in the following corollary. 

\begin{coro}\label{cor:gains}
    The effective gains of the proposed vector coded caching schemes under MF, ZF and RZF precoding, respectively take the form
    \begin{align}\label{effect_gain_MF_eq}
        &{\Gc}_{\rm MF} \left(G,Q;1,Q'\right) \triangleq \frac{{\mathcal{\bar R}}^{\text{MF}}(G,Q)}{{\mathcal{\bar R}}^{\text{MF}}(1,Q')} 
        \doteq \xi \frac{G Q}{ Q'} \frac{\ln \left( 1+\frac{L}{Q} \frac{P_t}{P_t+G} \right)}{\ln \left( 1+\frac{L}{Q'} \frac{P_t}{P_t+1} \right) },\\
    	&{\Gc}_{\rm ZF} \left(G,Q;1,Q'\right) \triangleq \frac{{\mathcal{\bar R}}^{\text{ZF}}(G,Q)}{{\mathcal{\bar R}}^{\text{ZF}}(1,Q')}   
    	= \xi \frac{G Q}{ Q'} \frac{\ln \left( 1+ \frac{P_t}{G} \left( \frac{L}{Q} -1 \right) \right)}{\ln \left( 1+ P_t \left( \frac{L}{Q'} -1 \right)\right)},\label{gain_ZF_eq}\\
    	&{\Gc}_{\rm RZF} \left(G,Q;1,Q'\right) \triangleq \frac{{\mathcal{\bar R}}^{\text{RZF}}(G,Q)}{{\mathcal{\bar R}}^{\text{RZF}}(1,Q')}   
 \  \stackrel{a.s.}{\longrightarrow}\  \xi \frac{\mathring{R}^{\rm RZF} (G, c L)}{\mathring{R}^{\rm RZF} (1, c' L)}, \label{gain_RZF_eq}
    \end{align}
    where $\mathring{R}^{\rm RZF} (\cdot, \cdot)$ is defined in~\eqref{Rate_RZF_eq}, and where $\xi \triangleq \frac{\left( L - Q\zeta_{G,Q} \right)}{\left( L - Q'\zeta_{1,Q'} \right)}$.
\end{coro}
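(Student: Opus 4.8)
The plan is to obtain each of the three claimed expressions by direct substitution: insert the effective-rate formulas of Corollary~\ref{cor:EffectiveRates3} into the defining ratio $\Gc(G,Q;1,Q') = \mathcal{\bar R}(G,Q)/\mathcal{\bar R}(1,Q')$ of Definition~\ref{effective_def}, specialize the denominator to the cacheless case $G=1$ with $c' = Q'/L$ streams, and collect terms. The whole argument is bookkeeping, and the one structural point worth isolating is how the CSI-penalty prefactors combine into $\xi$. I would treat this first, as it is common to all three precoders: writing $1 - c\zeta_{G,Q} = (L - Q\zeta_{G,Q})/L$ and $1 - c'\zeta_{1,Q'} = (L - Q'\zeta_{1,Q'})/L$, the two leading $1/L$ factors cancel in the ratio and leave exactly $\xi = (L-Q\zeta_{G,Q})/(L-Q'\zeta_{1,Q'})$.

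Next I would reduce the remaining rate factors precoder by precoder. For MF and RZF the pre-logarithmic scaling is $cGL$ over $c'L$, whose ratio is $cG/c' = GQ/Q'$ after inserting $c = Q/L$ and $c' = Q'/L$; for ZF the factors $QG$ over $Q'$ give the same $GQ/Q'$. The logarithmic terms then follow by rewriting $1/c = L/Q$ and $1/c' = L/Q'$, which turns the arguments appearing in Theorems~\ref{Tight_Bound_MF_coro}--\ref{Rate_RZF_lemma} into exactly the displayed forms; for RZF this reproduces the ratio $\mathring{R}^{\rm RZF}(G,cL)/\mathring{R}^{\rm RZF}(1,c'L)$ with $a_{\psi,k}$ and $p_\psi$ as in Theorem~\ref{Rate_RZF_lemma}.

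The only point requiring care --- and thus the main (if modest) obstacle --- is the mode of convergence attached to each line. The ZF sum-rate of Theorem~\ref{Rate_ZF_lemma} is an exact identity, so its gain is a genuine equality. The MF rate holds up to an $o(1)$ term (the $\doteq$ relation), and forming the ratio is legitimate because the denominator $\mathcal{\bar R}^{\rm MF}(1,Q')$ grows linearly in $L$ and is bounded away from zero, so the $o(1)$ perturbations are absorbed into the $\doteq$ of the gain. For RZF, numerator and denominator each converge almost surely to their deterministic equivalents by Theorem~\ref{Rate_RZF_lemma}; since $\mathring{R}^{\rm RZF}(1,c'L) > 0$, the continuous mapping theorem for a ratio of almost-surely convergent sequences with nonvanishing limit yields the stated almost-sure convergence of $\Gc_{\rm RZF}$ to $\xi\,\mathring{R}^{\rm RZF}(G,cL)/\mathring{R}^{\rm RZF}(1,c'L)$.
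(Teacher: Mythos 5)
Your proof is correct and follows exactly the route the paper intends: the corollary is obtained by direct substitution of the effective rates from Corollary~\ref{cor:EffectiveRates3} into the ratio of Definition~\ref{effective_def}, with the CSI prefactors combining into $\xi$ and the pre-log factors into $GQ/Q'$. Your added remarks on why the $\doteq$ and almost-sure convergences survive the ratio (denominator of order $L$ and bounded away from zero, continuous mapping) are a correct and welcome tightening of what the paper leaves implicit, not a different argument.
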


	\vspace{-0.5cm}\section{Optimizing Physical Layer Vector Coded Caching}\label{L1_opt_sec}
	Theorems~\ref{Tight_Bound_MF_coro}-\ref{Rate_RZF_lemma} reveal the important dependence of vector coded caching on the number of streams, $Q$, that we choose to activate. This dependence strikes at the very core of the problems stemming from power-splitting and CSI overheads.   Indeed, while an increased $Q\leq L$ allows for a higher DoF at lower subpacketization, this increase in the number of streams may not be beneficial in practice as it entails less power per stream as well as more CSI to be communicated. 
	
	For this reason, we here proceed to analytically optimize our schemes over the choices of $Q$. This optimization is tractable partly due to the simplicity of the achievable-rate expressions derived in the previous theorems, and while some of these expressions involve asymptotic approximations, they will, as we will verify numerically, be very precise (see for example Fig.~\ref{Rate_Opt_fig}). 
	Our analysis of the optimal $c^*$ will assume a variable $c=Q/L$ that is continuous and unbounded. As noted before, the optimization takes into account the impact of CSI acquisition under TDD. 
	
    Let us first focus on deriving the optimal $c^*$ for MF precoding, where we consider $c\in(0,\infty)$ and $\Omega \triangleq \frac{P_t}{P_t+G}$.
    \begin{theorem}\label{MF_cost_lemma}
    		In the MF-based $(G,Q)$-vector coded caching with non-negligible CSI costs, the optimal $c^*$ that maximizes ${\mathcal{\bar R}}^{\text{MF}}$ in the asymptotic sense, is given by the solution to the following:
    		\begin{align}\label{MF_root_c}
    				(1-2 \zeta_{G,Q} c^*)  \ln \Big( 1+ \frac{\Omega}{c^*}\Big) - \frac{\Omega (1-\zeta_{G,Q} c^*)}{\Omega+c^*} =0.
    		\end{align}
    \end{theorem}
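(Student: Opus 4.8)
The plan is to treat the effective rate $\mathcal{\bar R}^{\text{MF}}(G,cL)$ from Corollary~\ref{cor:EffectiveRates3} as a scalar function of the single continuous variable $c=Q/L$, with $G$ and $L$ held fixed, and to locate its maximizer by a standard first-order analysis. The first thing I would emphasize is that $\zeta_{G,Q}=\frac{\beta_{\rm tot} G L}{T_c W_c}$ carries no dependence on $Q$ (hence none on $c$), so throughout the optimization it is a genuine constant; likewise $\Omega=\frac{P_t}{P_t+G}$ is constant. Writing $\zeta\triangleq\zeta_{G,Q}$ for brevity, the objective reduces, up to the positive multiplicative constant $GL$, to
\[
g(c)\triangleq (1-\zeta c)\, c\,\ln\!\Big(1+\tfrac{\Omega}{c}\Big).
\]

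Next I would establish that the maximizer is interior, so that it is characterized by a stationarity condition. On the physically meaningful interval $(0,1/\zeta)$ the factor $(1-\zeta c)$ is positive and $g(c)>0$; moreover $g(c)\to 0$ as $c\to 0^+$ (because $c\ln(1/c)\to 0$), $g(1/\zeta)=0$, and $g(c)<0$ for $c>1/\zeta$. Hence $g$ attains its maximum at some interior point $c^*\in(0,1/\zeta)$, which is necessarily a critical point. The computation of $g'$ is then routine: applying the product rule with $u(c)=c-\zeta c^2$ and $v(c)=\ln(1+\Omega/c)$, together with the elementary derivative $v'(c)=-\frac{\Omega}{c(c+\Omega)}$, gives
\[
g'(c)=(1-2\zeta c)\ln\!\Big(1+\tfrac{\Omega}{c}\Big)-\frac{\Omega(1-\zeta c)}{c+\Omega}.
\]
Setting $g'(c^*)=0$ reproduces exactly~\eqref{MF_root_c}.

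The one point that deserves care — and the main, if mild, obstacle — is arguing that this stationarity condition pins the optimizer down, i.e.\ that \eqref{MF_root_c} isolates the maximizing $c^*$ rather than merely some critical point. I would handle existence of a root by a sign argument on $g'$: as $c\to 0^+$ the logarithmic term forces $g'(c)\to+\infty$, whereas at the right endpoint one computes $g'(1/\zeta)=-\ln(1+\zeta\Omega)<0$ (the second term vanishes there since $1-\zeta c=0$), so $g'$ changes sign on $(0,1/\zeta)$ and \eqref{MF_root_c} admits at least one solution there. Proving that the root is \emph{unique} — so the first-order condition is not only necessary but actually identifies the single maximizer — is the part that needs the extra effort, and I would supply it by showing $g'$ is strictly decreasing across the relevant interval (each of its two terms being monotone in the favorable direction), either through a short $g''$ computation or, consistently with how the paper validates its other asymptotic expressions, by numerical confirmation (cf.\ Fig.~\ref{Rate_Opt_fig}). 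With uniqueness in hand, the interior stationary point characterized by \eqref{MF_root_c} is the asymptotic optimizer $c^*$, completing the argument.
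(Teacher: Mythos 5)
Your setup and the stationarity computation are correct and match the paper: $\zeta_{G,Q}$ and $\Omega$ are indeed constants in $c$, the product-rule derivative of $(1-\zeta c)\,c\ln(1+\Omega/c)$ gives exactly \eqref{MF_root_c}, and your sign argument at the endpoints correctly yields existence of an interior root. However, there is a genuine gap in the step you yourself flag as the one needing care: the claim that $g'$ is strictly decreasing because ``each of its two terms is monotone in the favorable direction'' is false. The second term of $g'$, namely $-\frac{\Omega(1-\zeta c)}{c+\Omega}$, is \emph{increasing} in $c$ on $(0,1/\zeta)$ (its numerator $1-\zeta c$ decreases and its denominator increases, so the positive fraction decreases and its negative is increasing); and the first term is a product of two decreasing factors of which one changes sign at $c=1/(2\zeta)$, so it too is not obviously decreasing. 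A term-by-term monotonicity argument therefore does not go through, and ``numerical confirmation'' is not a substitute for the proof of uniqueness that the theorem's ``the optimal $c^*$ is given by the solution to'' phrasing requires.

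The conclusion is nevertheless true, and the paper closes exactly this gap with a cleaner decomposition: writing $\mathcal{\bar R}^{\rm MF}=(1-\zeta c)\bar R^{\rm MF}$ with $\bar R^{\rm MF}\propto c\ln(1+\Omega/c)$, one gets
\begin{equation*}
\frac{\partial^2 \mathcal{\bar R}^{\rm MF}}{\partial c^2}=(1-\zeta c)\,\frac{\partial^2 \bar R^{\rm MF}}{\partial c^2}-2\zeta\,\frac{\partial \bar R^{\rm MF}}{\partial c},
\end{equation*}
and then shows separately that $\frac{\partial^2 \bar R^{\rm MF}}{\partial c^2}=-GL\frac{\Omega^2}{c(\Omega+c)^2}<0$ and that $\frac{\partial \bar R^{\rm MF}}{\partial c}=GL[\ln\frac{\Omega+c}{c}+\frac{c}{\Omega+c}-1]\ge 0$ (the latter via the elementary fact that $\ln x+1/x\ge 1$ for $x\ge 1$). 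Since $1-\zeta c\ge 0$ on the relevant interval, both contributions are nonpositive, so the effective rate is concave and the root of \eqref{MF_root_c} is the unique global maximizer. If you replace your term-by-term monotonicity sketch with this second-derivative decomposition, your argument becomes complete and is essentially the paper's proof.
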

    \vspace{-0.3cm}
    \begin{proof}
    	We prove the theorem by demonstrating that ${\mathcal{\bar R}}^{\rm MF}$ as derived in Corollary~\ref{cor:EffectiveRates3} is concave over $c \in (0, \infty)$. Let us first note that 
    	the first derivative of~$\bar R^{\rm MF}$ in \eqref{R_MF_sum_caseii} is given by
    	\begin{align}\label{first_der_MF_sumrate}
    			\frac{\partial  \bar R^{\rm MF}}{\partial c} = G L \left[ \ln \left( \frac{\Omega + c}{c} \right)
    			+\frac{c}{\Omega +c} -1\right],
    	\end{align}
    	whereas the second derivative is then given by
        \begin{align}\label{R_MF_sec_der}
            \frac{\partial^2  \bar R^{\rm MF}}{\partial c^2}
            = -G L \frac{\Omega^2}{c(\Omega+c)^2} <0.
        \end{align}
        By differentiating ${\mathcal{\bar R}}_{\rm sum}^{\rm MF}$ in~\eqref{Rate_CSI_Def} with respect to $c$, we  have that
        \begin{align}
            \frac{\partial {\mathcal{\bar R}}^{\rm MF}}{\partial c}
            &= (1-\zeta_{G,Q} c)  \frac{\partial  \bar R^{\rm MF}}{\partial c} - \zeta_{G,Q}  \bar R^{\rm MF}, \qquad\qquad
             \frac{\partial^2 {\mathcal{\bar R}}^{\rm MF}}{\partial c^2}
            = (1- \zeta_{G,Q} c) \frac{\partial^2  \bar R^{\rm MF}}{\partial c^2} - 2 \zeta_{G,Q} \frac{\partial  \bar R^{\rm MF}}{\partial c}. \label{derivatives_wcost} 
        \end{align}    
        Let us know inspect the signs of these derivatives. 
        First, note that $\frac{\Omega + c}{c} \geq 1$ for any feasible $\Omega$, $c$, simply because $\Omega = \frac{P_t}{P_t+G}\geq 0$. Let us also note that the function $\ln(x)+1/x$ is decreasing when $x \in (0,1)$ and is increasing when $x \in [1,\infty)$, and also that its minimum value --- attained at $x=1$ --- is equal to~$1$.  
        Consequently, it follows that 
    	\begin{align}\label{R_MF_sum_zero}
    	    \frac{\partial \bar R^{\rm MF}}{\partial c} = G L \left[ \ln \left( \frac{\Omega + c}{c}\right)
    			+\frac{c}{\Omega + c} -1\right]   \ge 0,
    	\end{align}
    	where the inequality is strict unless $ \frac{\Omega + c}{c}=1$ corresponding to $c \to \infty$. Therefore, we  conclude that $\bar R^{\rm MF}$ is monotonically increasing over $c \in (0,\infty)$.
    	
        From the fact that $\frac{\partial  \bar R^{\rm MF}}{\partial c} \ge 0$ (cf.~\eqref{R_MF_sum_zero}), the fact that $\frac{\partial^2  \bar R^{\rm MF}}{\partial c^2}  <0$ (cf.~\eqref{R_MF_sec_der}), and the fact that $1-\zeta_{G,Q} c \ge 0$, we can conclude that $\frac{\partial^2 {\mathcal{\bar R}}^{\rm MF}}{\partial c^2} <0$ in \eqref{derivatives_wcost}, and therefore ${\mathcal{\bar R}}^{\rm MF}$ is concave over $c \in (0, \infty)$, and thus that the global maximum point of ${\mathcal{\bar R}}^{\rm MF}$ is at the root $c^*$ of $\frac{\partial {\mathcal{\bar R}}^{\rm MF}}{\partial c}$. 
    \end{proof}
Next, we consider ZF-based cache-aided precoding, for which we have the following.\vspace{-0.3cm}
	\begin{theorem}\label{ZF_cost_lemma}
			In the ZF-based $(G,Q)$-vector coded caching with non-negligible CSI costs, the optimal $c^*$ that maximizes ${\mathcal{\bar R}}^{\text{ZF}}$, is given by the solution to the following equation:
			\begin{align}\label{zero_ZF_cost_identi}
			 \big( 1- 2 \zeta_{G,Q} c^* \big)   \ln \left( 1+ \frac{P_t}{G} \left( \frac{1}{c^*}-1 \right)\right) -  \frac{ (1- \zeta_{G,Q} c^*)P_t / G}{(1-P_t/G)c^*+P_t/G}=0.
			\end{align}
	\end{theorem}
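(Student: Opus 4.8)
The plan is to mirror the structure of the proof of Theorem~\ref{MF_cost_lemma}, establishing that the effective rate $\mathcal{\bar R}^{\rm ZF}(G,Q) = (1 - \zeta_{G,Q} c)\,\bar R^{\rm ZF}$ (cf.\ Corollary~\ref{cor:EffectiveRates3}, with $Q = cL$) has a unique interior maximizer on $c \in (0,1)$ characterized by $\partial \mathcal{\bar R}^{\rm ZF}/\partial c = 0$, and then to verify that this condition is precisely~\eqref{zero_ZF_cost_identi}. First I would introduce the shorthand $u(c) \triangleq (1-\tfrac{P_t}{G})c + \tfrac{P_t}{G}$, so that $\bar R^{\rm ZF} = cLG\,\ln(u/c)$, and compute the first two derivatives of the uncapped rate. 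A short calculation gives $\partial \bar R^{\rm ZF}/\partial c = LG[\ln(u/c) - (P_t/G)/u]$ and, after combining terms over the common denominator $c\,u^2$, the clean expression $\partial^2 \bar R^{\rm ZF}/\partial c^2 = -LG\,(P_t/G)^2/(c\,u^2) < 0$, the exact ZF analogue of~\eqref{R_MF_sec_der}.

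The crucial difference from the MF case --- and the step I expect to be the main obstacle --- is that $\bar R^{\rm ZF}$ is \emph{not} monotonically increasing on $(0,1)$. Whereas the MF proof could invoke $\partial \bar R^{\rm MF}/\partial c \ge 0$ to sign every term of the product-rule expansion $\partial^2 \mathcal{\bar R}^{\rm ZF}/\partial c^2 = (1-\zeta_{G,Q}c)\,\partial^2 \bar R^{\rm ZF}/\partial c^2 - 2\zeta_{G,Q}\,\partial \bar R^{\rm ZF}/\partial c$ (cf.~\eqref{derivatives_wcost}), here $\partial \bar R^{\rm ZF}/\partial c$ changes sign: evaluating at the endpoints gives $\partial \bar R^{\rm ZF}/\partial c \to +\infty$ as $c\to 0^+$ and $\partial \bar R^{\rm ZF}/\partial c \to -L P_t$ as $c\to 1^-$ (since $u\to 1$ and $\ln(u/c)\to 0$), reflecting that ZF squanders all its degrees of freedom on nulling as $c\to 1$. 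By the established concavity $\partial^2\bar R^{\rm ZF}/\partial c^2<0$, the derivative $\partial \bar R^{\rm ZF}/\partial c$ is strictly decreasing and hence vanishes at a unique $c_0 \in (0,1)$, being positive on $(0,c_0)$ and negative on $(c_0,1)$.

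I would then split the analysis at $c_0$. On $(0,c_0)$, the three ingredients $(1-\zeta_{G,Q}c) > 0$ (guaranteed by the natural feasibility constraint $\zeta_{G,Q}c < 1$, i.e.\ that less than the full block is spent on pilots), $\partial^2 \bar R^{\rm ZF}/\partial c^2 < 0$, and $\partial \bar R^{\rm ZF}/\partial c > 0$ combine to give $\partial^2 \mathcal{\bar R}^{\rm ZF}/\partial c^2 < 0$, so $\mathcal{\bar R}^{\rm ZF}$ is strictly concave there. On $(c_0,1)$, both $(1-\zeta_{G,Q}c)$ and $\bar R^{\rm ZF}$ are positive and strictly decreasing (note $u/c > 1$ for $c<1$, so $\bar R^{\rm ZF}>0$), hence their product $\mathcal{\bar R}^{\rm ZF}$ is strictly decreasing and admits no stationary point. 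Consequently the global maximizer lies in $(0,c_0)$ and is the unique root of $\partial \mathcal{\bar R}^{\rm ZF}/\partial c = 0$.

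Finally I would make the first-order condition explicit. Writing $\partial \mathcal{\bar R}^{\rm ZF}/\partial c = (1-\zeta_{G,Q}c)\,\partial \bar R^{\rm ZF}/\partial c - \zeta_{G,Q}\bar R^{\rm ZF}$, substituting the expressions above, and collecting the two logarithmic contributions yields $(1-2\zeta_{G,Q}c)\ln(u/c) - (1-\zeta_{G,Q}c)(P_t/G)/u$. Setting this to zero and rewriting $\ln(u/c) = \ln(1+\tfrac{P_t}{G}(\tfrac{1}{c}-1))$ together with $(P_t/G)/u = (P_t/G)/[(1-P_t/G)c + P_t/G]$ recovers exactly~\eqref{zero_ZF_cost_identi}, completing the proof. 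The only genuinely delicate point is the sign-tracking around $c_0$; everything else is routine differentiation already templated by the MF argument.
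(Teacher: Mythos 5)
Your proposal is correct and follows essentially the same route as the paper's proof: establish $\partial^2 \bar R^{\rm ZF}/\partial c^2<0$, locate the unique root $c_0$ (the paper's $c^\star_R$) of $\partial \bar R^{\rm ZF}/\partial c$ in $(0,1)$, show $\mathcal{\bar R}^{\rm ZF}$ is decreasing on $(c_0,1)$ and concave on $(0,c_0)$, and read off the first-order condition. The only cosmetic difference is that you place $c_0$ in $(0,1)$ by evaluating the derivative at the endpoints, whereas the paper argues from $\bar R^{\rm ZF}(1)=0$ and $\bar R^{\rm ZF}>0$ on $(0,1)$.
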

    \begin{proof}
        The proof builds on the properties of the first and second derivatives of ${\mathcal{\bar R}}^{\rm ZF}$, in a similar manner as in the proof of Theorem~\ref{MF_cost_lemma}. These derivatives now take the form 
        $\frac{\partial {\mathcal{\bar R}}^{\rm ZF}}{\partial c} = (1-\zeta_{G,Q} c)  \frac{\partial  \bar R^{\rm ZF}}{\partial c} - \zeta_{G,Q}  \bar R^{\rm ZF}$, and 
        $\frac{\partial^2 {\mathcal{\bar R}}^{\rm ZF}}{\partial c^2} = (1- \zeta_{G,Q} c) \frac{\partial^2  \bar R^{\rm ZF}}{\partial c^2} - 2 \zeta_{G,Q} \frac{\partial  \bar R^{\rm ZF}}{\partial c}$.  
After applying~\eqref{ZF_sumrate}, these derivatives take the form
		\begin{align}
			 &\frac{\partial \bar R^{\rm ZF}}{\partial c} = G L \left[ \ln \left( 1+ \frac{P_t}{G} \left( \frac{1}{c}-1 \right)\right) - \frac{P_t / G}{(1-P_t/G)c+P_t/G}\right], \label{first_der_ZF} \\
			 &\frac{\partial^2 \bar R^{\rm ZF}}{\partial c^2} = - G L \frac{(P_t/G)^2}{c \big( \left(1-P_t/G \right) c +P_t/G \big)^2} <0.\label{second_der_ZF}
		\end{align}
		Since the second derivative $\frac{\partial^2 \bar R^{\rm ZF}}{\partial c^2}$ in~\eqref{second_der_ZF} is always negative, $\bar R^{\rm ZF}$ is a concave function with respect to $c$. 
		Therefore, the root of $\frac{\partial \bar R^{\rm ZF}}{\partial c}=0$, which we denote by $c^\star_R$, is the global maximum of $\bar R^{\rm ZF}$ over $c \in (0,\infty)$. 
		Moreover, it follows from~\eqref{ZF_sumrate} that $\bar R^{\rm ZF}=0$ for $c=1$ and that $\bar R^{\rm ZF}>0$ for $0<c<1$, which implies that $c^\star_R$ belongs in the interval $(0,1)$. 
		
        Since $\frac{\partial \bar R^{\rm ZF}}{\partial c} \big|_{c=c^\star_R}=0$ and since $\frac{\partial^2 \bar R^{\rm ZF}}{\partial c^2}$ is always negative, 
        we know that $\frac{\partial  \bar R^{\rm ZF}}{\partial c}$ is monotonically decreasing and that this same $\frac{\partial \bar R^{\rm ZF}}{\partial c}$ is negative for all $c \in (c^\star_R,1)$. 

        Consequently, ${\mathcal{\bar R}}^{\rm ZF}$ is monotonically decreasing in the interval $c \in (c^\star_R,1)$. Thus the maximum point of ${\mathcal{\bar R}}^{\rm ZF}$ must belong in the interval $(0,c^\star_R)$ where we can see that $\frac{\partial \bar R^{\rm ZF}}{\partial c} >0$ and $\frac{\partial^2  \bar R^{\rm ZF}}{\partial c^2} <0$. 
        Hence, ${\mathcal{\bar R}}^{\rm ZF}$ is concave throughout $c \in (0,c^\star_R)$, and thus the root of $\frac{\partial {\mathcal{\bar R}}^{\rm ZF}}{\partial c}$ is the global maximum point of ${\mathcal{\bar R}}^{\rm ZF}$, where this point $c^*$ must belong in $(0,c^\star_R)$. Finally, substituting \eqref{ZF_sumrate} and \eqref{first_der_ZF} into $\frac{\partial {\mathcal{\bar R}}^{\rm ZF}}{\partial c}$ yields \eqref{zero_ZF_cost_identi} and proves the theorem. 
    \end{proof}

	\begin{remark}\label{high_SNR_opt_ZF_remark}
	As $P_t \to \infty$, we can write \eqref{first_der_ZF} as 
		$\frac{\partial \bar R^{\rm ZF}}{\partial c}  = G L  \left[ \ln \left( \frac{P_t}{G} \right) + \ln \left( \frac{1-c}{c} \right)  - \frac{1}{1-c} \right] + o(1),$
		where $\lim_{P_t \to \infty} o(1)=0$.
	    Therefore, in the high-SNR regime and without taking CSI costs into account, the optimal value of $c$ that maximizes ${\mathcal{\bar R}}^{\text{ZF}}$ and thus\footnote{Recall that in the high SNR regime, ZF and RZF coincide.} ${\mathcal{\bar R}}^{\text{RZF}}$, is given by 
	    $c^* = \left( 1+\frac{1}{\mathcal{W} \left(  P_t/(eG) \right)} \right)^{-1}$,
	    upon omitting an $o(1)$ additive term, and upon using $\mathcal{W}(\cdot)$ to denote the Lambert W-Function. This expression can serve as a good approximation in those moderate-to-high SNR scenarios where the dimensionality of the problem implies a relatively small CSI cost. As one can see, as the SNR becomes very large, the above $c^*$ converges, as is known, to 1, corresponding to $Q\approx L$. 
	\end{remark}
	\vspace{-0.3cm}

		Having derived the above optimal $c^*$, we can now consider the ratio
		\begin{align} \label{Ratio1}
		    \Gc^\star \triangleq \frac{\max_{Q \in \mathbb{Z}^+} {\mathcal{\bar R}^i}(G, Q)}{\max_{Q' \in \mathbb{Z}^+} {\mathcal{\bar R}^i} (G=1,Q')},
		\end{align}
		which describes the performance boost due to caching, over (independently) optimized downlink cacheless systems, after accounting for CSI costs. These gains $\Gc_{\text{MF}}^\star, \Gc_{\text{ZF}}^\star, \Gc_{\text{RZF}}^\star$ are reported for the three precoders of interest. As one would expect, this comparison is done under a fixed set of SNR and antenna resources. The transition from the continuous $c$ to the operating $Q$, will follow by simply considering $Q^* = \argmax_{ Q \in \{ \lfloor c^*L  \rfloor, \lfloor c^*L  \rfloor+1 \}} \big\{  \mathcal{\bar R} \big( Q \big) \big\}$, where $\lfloor \cdot \rfloor$ denotes the nearest integer less than or equal to the argument.

\vspace{-0.3cm}\section{Numerical Results}\label{numerical_sec}
We proceed to numerically demonstrate the achieved effective rates as well as the effective gains that an optimized vector coded caching scheme provides over the independently optimized cacheless downlink solution. We note that the simulated results employ no approximations\footnote{For example, the corresponding SINR is taken directly from~\eqref{SINR_actual_def}.}. 
For ease of exposition, we list in Table~\ref{thm_list} the derived theorems and corollaries. 
%

\renewcommand*{\arraystretch}{1.35}%
\begin{table}[t]
  \centering~\vspace{-0.5cm}
  \caption{Derived Theorems (Thms.) and Corollaries (Cors.)}\label{thm_list}
  \label{Scenario_Tab}
   \scalebox{0.844}{
  \begin{tabular}{| c| c | c | c | c | c | c | c |}
  \hline
  Thm. \ref{Tight_Bound_MF_coro}  & Thm. \ref{Rate_ZF_lemma} & Thm. \ref{Rate_RZF_lemma} 
  & Thm. \ref{MF_cost_lemma} & Thm. \ref{ZF_cost_lemma} & Cor. \ref{cor:EffectiveRatesCacheless} & Cor. \ref{cor:EffectiveRates3}  & Cor. \ref{cor:gains} \\
  \hline  		
  \makecell[c]{Average \\ sum-rate MF} & \makecell[c]{Average \\ sum-rate ZF} & \makecell[c]{Average \\ sum-rate RZF}  & \makecell[c]{Optimal $Q$ \\ for MF }   &\makecell[c]{Optimal $Q$ \\ for ZF} &\makecell[c]{Average sum-rate\\ in cacheless MF} & \makecell[c]{Effective rates\\in MF/ZF/RZF}  & \makecell[c]{Effective gains\\in MF/ZF/RZF}\\
  \hline     	
  \end{tabular}~\vspace{-3ex}
   }
\end{table}
\renewcommand*{\arraystretch}{1}%

    \vspace{-0.5cm}
    \subsection{Multiplicative Boost of Vector Coded Caching over Downlink Systems}
    The following figures build on the analysis of the effective sum rates and effective gains of Section~\ref{large_ana_sec}, as well as on the analysis of the optimized gains of Section~\ref{L1_opt_sec}. These figures incorporate the CSI costs in the realistic scenario of having $\beta_{\rm tot} = 10$, $T_c = 0.04$ seconds and $W_c = 300$~kHz (cf.~\eqref{Rate_CSI_Def}). 
        Figure~\ref{Rate_Opt_fig} (left) describes the effective rate of the different cache-aided schemes, for different values of $Q$.  The plot highlights the tightness of the results of Theorems \ref{Tight_Bound_MF_coro}-\ref{Rate_RZF_lemma} (after accounting for CSI costs: see Corollary~\ref{cor:EffectiveRates3}), where we see that indeed the derived asymptotically-approximate expressions have no discernible distance from the actual (simulated) performance. The vertical
lines indicate the optimal $Q$ derived in Theorems~\ref{MF_cost_lemma}-\ref{ZF_cost_lemma}. These optimal points indeed match the actual maximum point of the curves. 
Figure~\ref{Rate_Opt_fig} (right) extends this illustration of the tightness of the results, to Theorems \ref{MF_cost_lemma}-\ref{ZF_cost_lemma}, by illustrating the optimized (over all $Q$ choices) effective rate performance of the three precoders, comparing the derived results\footnote{Here we note that for the case of RZF precoding, the figure plots the result of Theorem~\ref{Rate_RZF_lemma} (Corollary~\ref{cor:EffectiveRates3}), considering a $c^*$ value that is drawn from an exhaustive search based on these derived expressions.}, to the actual performance.


		\begin{figure}[t]\vspace{-0.1cm}\centering
				\begin{subfigure}[t]{.499\textwidth}\centering		
				\includegraphics[width=3.2 in]{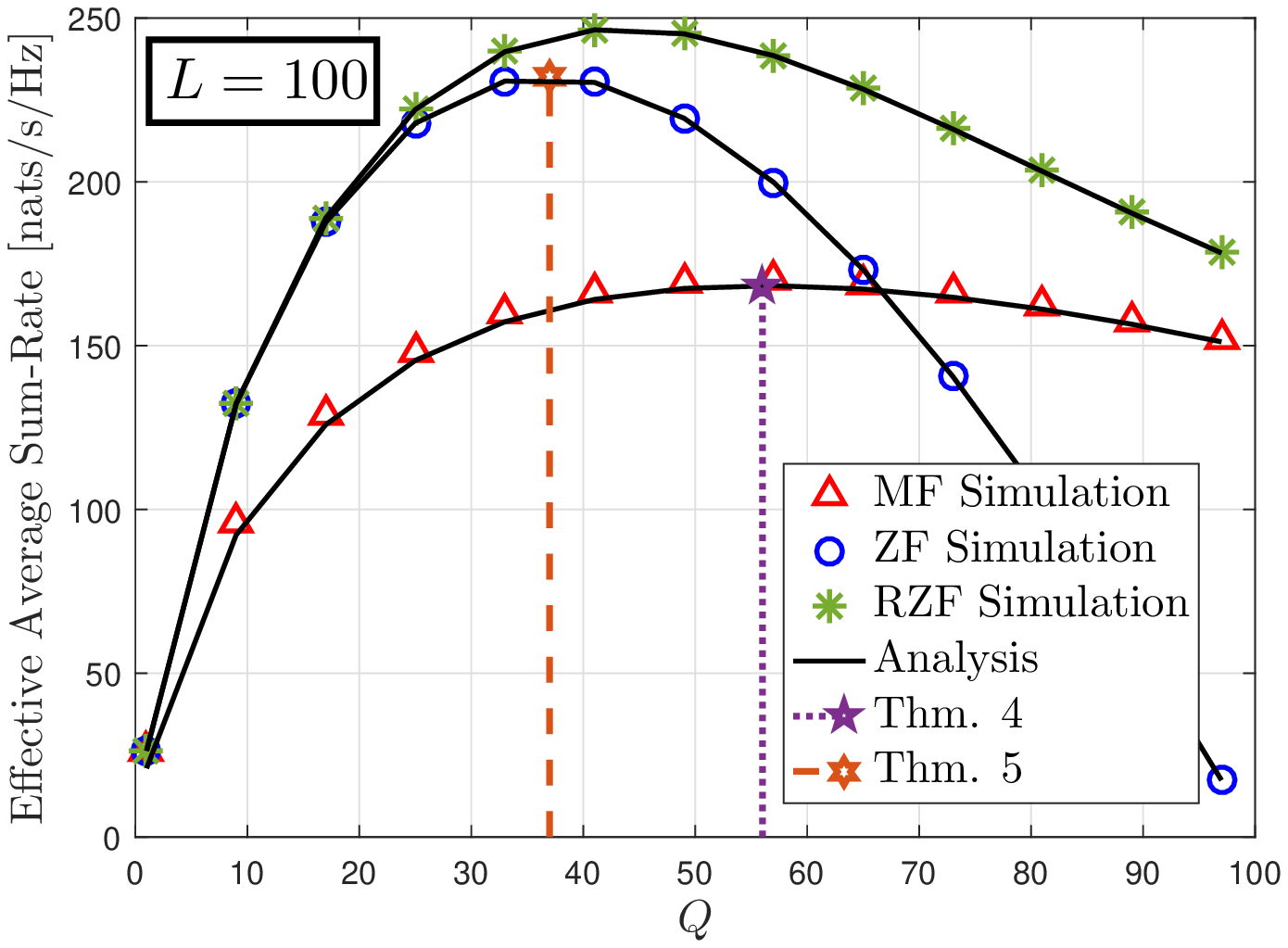}
				\end{subfigure}~%
				\begin{subfigure}[t]{.499\textwidth}\centering					
				\includegraphics[width=3.2 in]{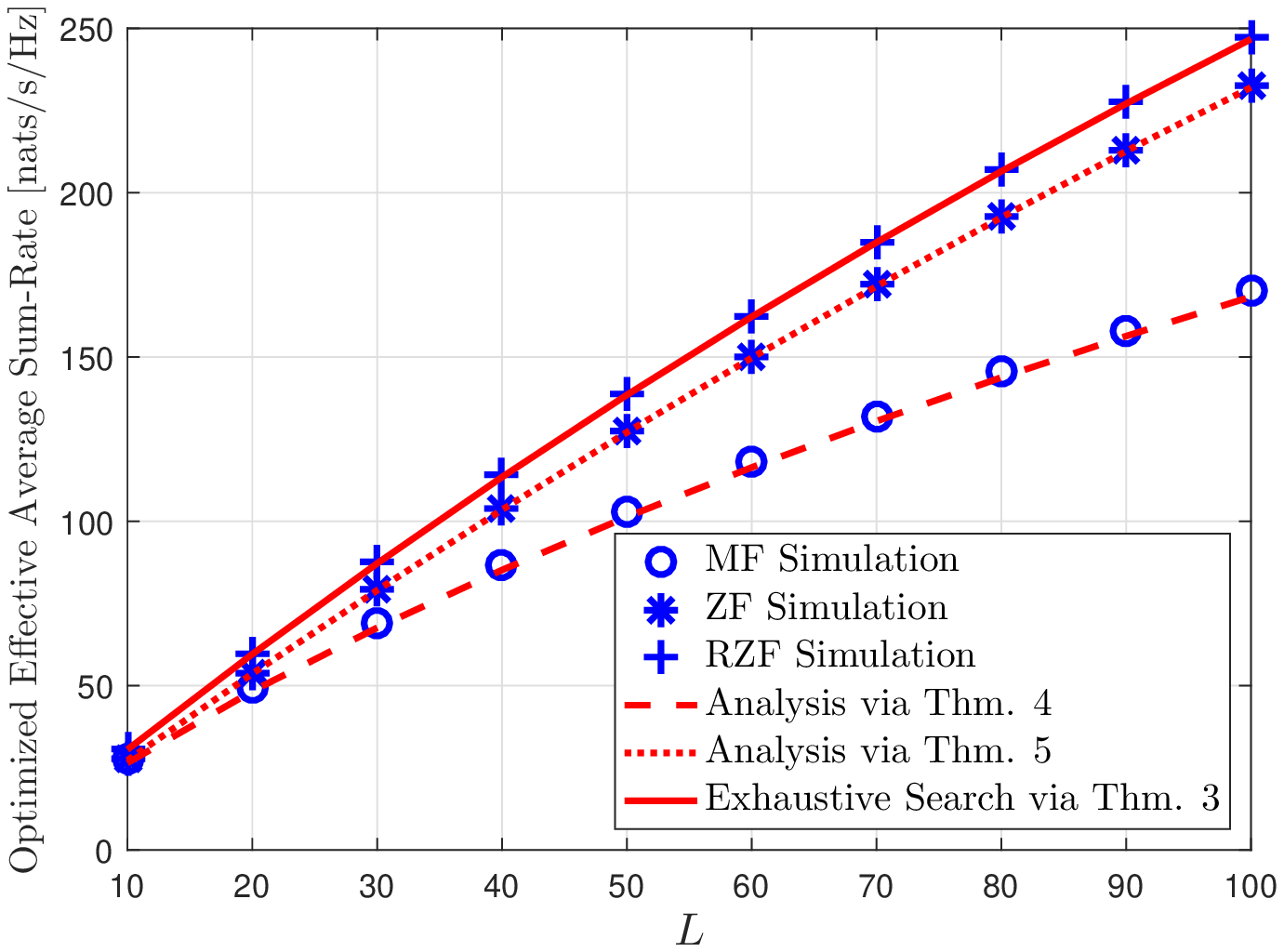}
				\end{subfigure}~\vspace{-0.5cm}
			\caption{Effective rate $\mathcal{\bar R}$ and optimized effective rate for $P_t =10$ dB and $G=5$.}\vspace{-0.4cm}\label{Rate_Opt_fig}
		\end{figure}		
		\begin{figure}[t]\vspace{-0.5cm}\centering
				\begin{subfigure}[t]{.499\textwidth}\centering		
				\includegraphics[width=3.2 in]{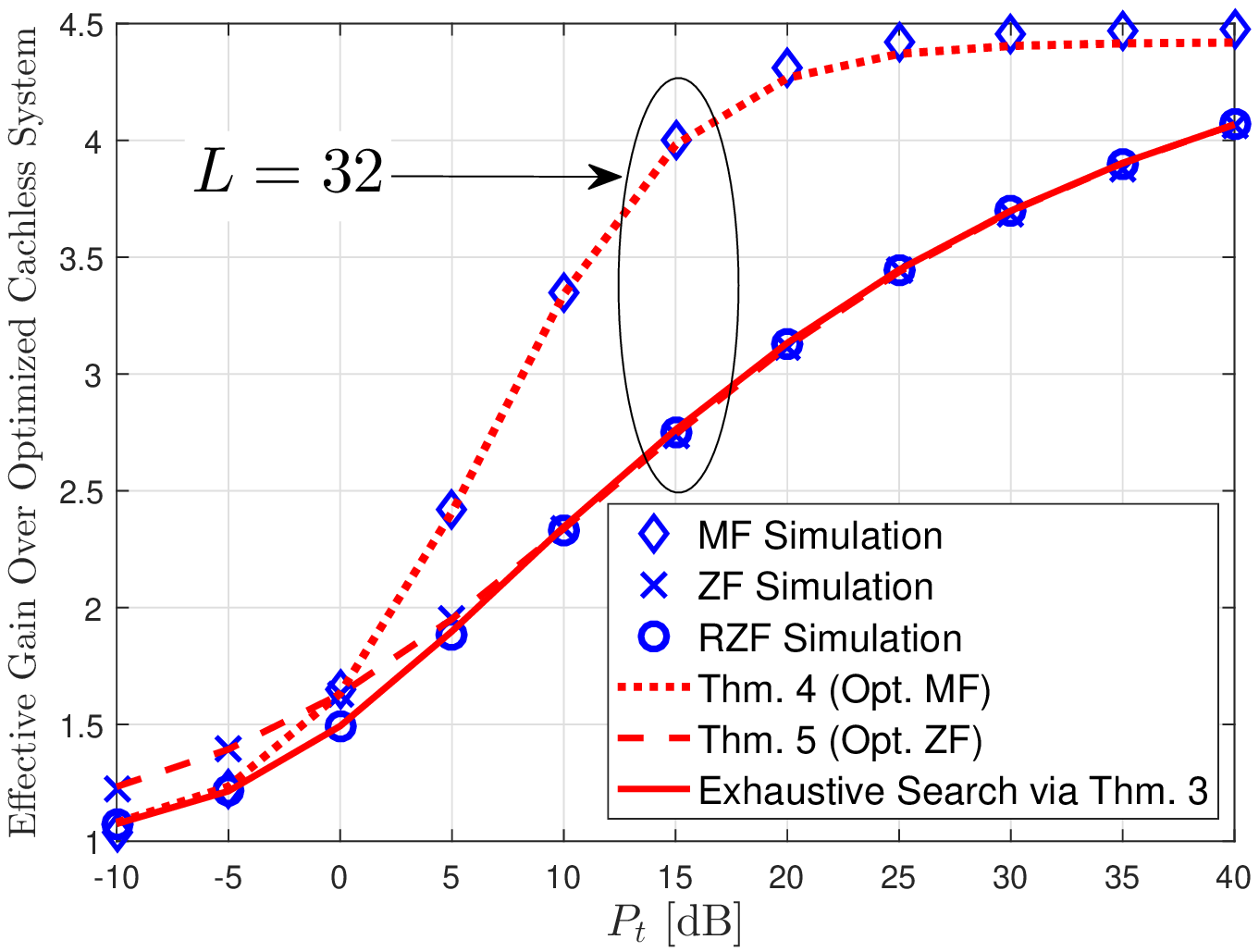}
				\end{subfigure}~%
				\begin{subfigure}[t]{.499\textwidth}\centering					
				\includegraphics[width=3.2 in]{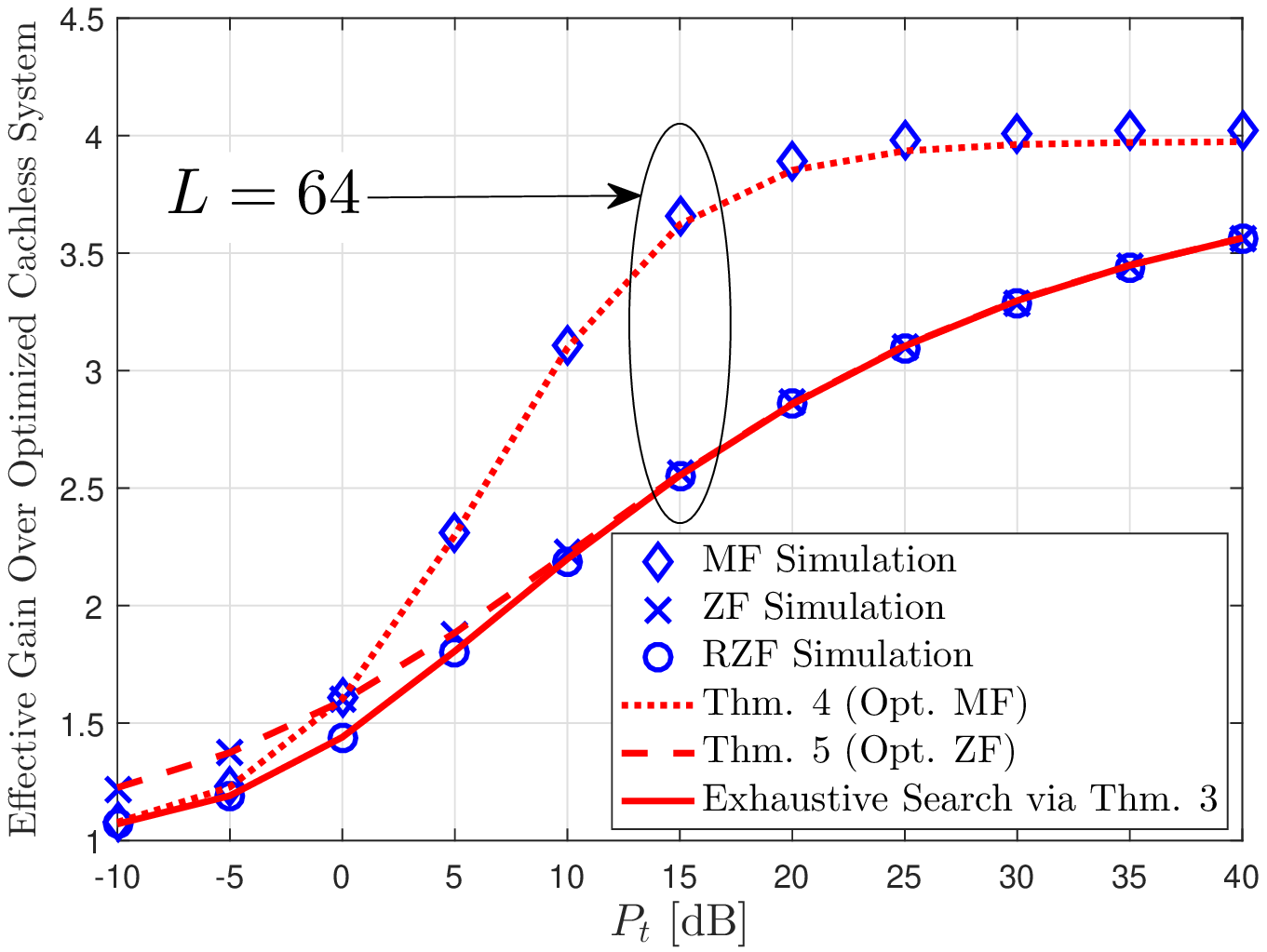}
				\end{subfigure}~\vspace{-0.5cm}
			\caption{Effective gain $\Gc^\star$ over optimized cacheless system for $L\in \{32, 64\}$ and $G=6$.}\vspace{-.3cm}\label{Actual_Boost_fig}
		\end{figure}

		Fig.~\ref{Actual_Boost_fig} focuses on the effective gains over optimized cacheless downlink systems. As before, the theoretical and simulated results match fully. Here the theoretical results reflect the effective gain ratio $\Gc^\star$ in \eqref{Ratio1}, where the derived effective-rate expressions are from Corollary~\ref{cor:EffectiveRates3} (and the corresponding Theorems~\ref{Tight_Bound_MF_coro}-\ref{Rate_RZF_lemma}), and where the optimizing $c^*$ are directly from\footnote{We recall that in the RZF case, the figure plots Theorem~\ref{Rate_RZF_lemma} and Corollary~\ref{cor:EffectiveRates3}, after a numerically evaluated $c^*$.} Theorems~\ref{MF_cost_lemma}-\ref{ZF_cost_lemma}.

		Under the above realistic coherence periods and coherence bandwidths, realistic CSI costs, as well as realistic values of SNR and $L$, the multiplicative boosts over the achievable rates of optimized downlink systems are quite notable. For example, for $64$ transmit antennas, a receiver-side SNR of $20$ dB, the same $W_c = 300$ kHz and $T_c=40$ ms, and under realistic file-size and cache-size constraints that allow us to assume $G=6$, vector coded caching is here shown to offer a multiplicative boost of about $280\%$ in ZF/RZF precoding and $380\%$ over MF-based cacheless systems, whereas for the case of $32$ antennas the gain elevates to $310\%$ for ZF and to a $430\%$ multiplicative boost in the performance of already optimized MF-based cacheless systems\footnote{In addition to the speedup factor reported here, the use of caches can also lead to additional --- albeit marginal --- reductions in delivery-time, complements of the so-called local caching gain, which is though of no particular interest to this study.}. As one would expect, this same figure reveals that the gains $\Gc^\star$ grow monotonically with the SNR, and often come very close to the theoretical upper bound of $G$.

Another interesting comparison is shown in Fig.~\ref{Gain_fig}, where we ask that the cache-aided and cacheless scenarios share the same exact multiplexing gain $Q$. The motivation for this comparison traces back to the idea of channel hardening, which refers to the fact that as long as $L$ is sufficiently large, and as long as $Q/L$ is sufficiently small, the channel converges to a deterministic value, thus making CSI acquisition easier. While this paper is not about the channel hardening properties of the cache-aided downlink, this Fig.~\ref{Gain_fig} --- which plots the effective gain ${\Gc}(G,Q;1,Q) ={\mathcal{\bar R}}(G,Q) / {\mathcal{\bar R}}(1,Q)$ --- offers a first indication of yet another benefit of vector coded caching, which now allows us to serve more users at a time, but do so with a controlled ratio $Q/L$ that guarantees certain channel hardening conditions. Focusing on the case of a fixed $Q=8$ for both the cache-aided ($G=6$), as well as the cacheless case $(G=1)$, Fig.~\ref{Gain_fig} reveals that under the same $W_c, T_c$ and under realistic SNR values of, for example, approximately 15dB, the effective gains (over cacheless equivalent systems with the same $Q/L$) approach $400$\% for the ZF-based precoders, and even go beyond $540$\% when using MF-based precoding. 
Similar gains are recorded in the larger scenario with $L=128$ transmit antennas.

			\begin{figure}[t]\vspace{-0.4cm}\centering
					\begin{subfigure}[t]{.499\textwidth}\centering		
					\includegraphics[width=3.2 in]{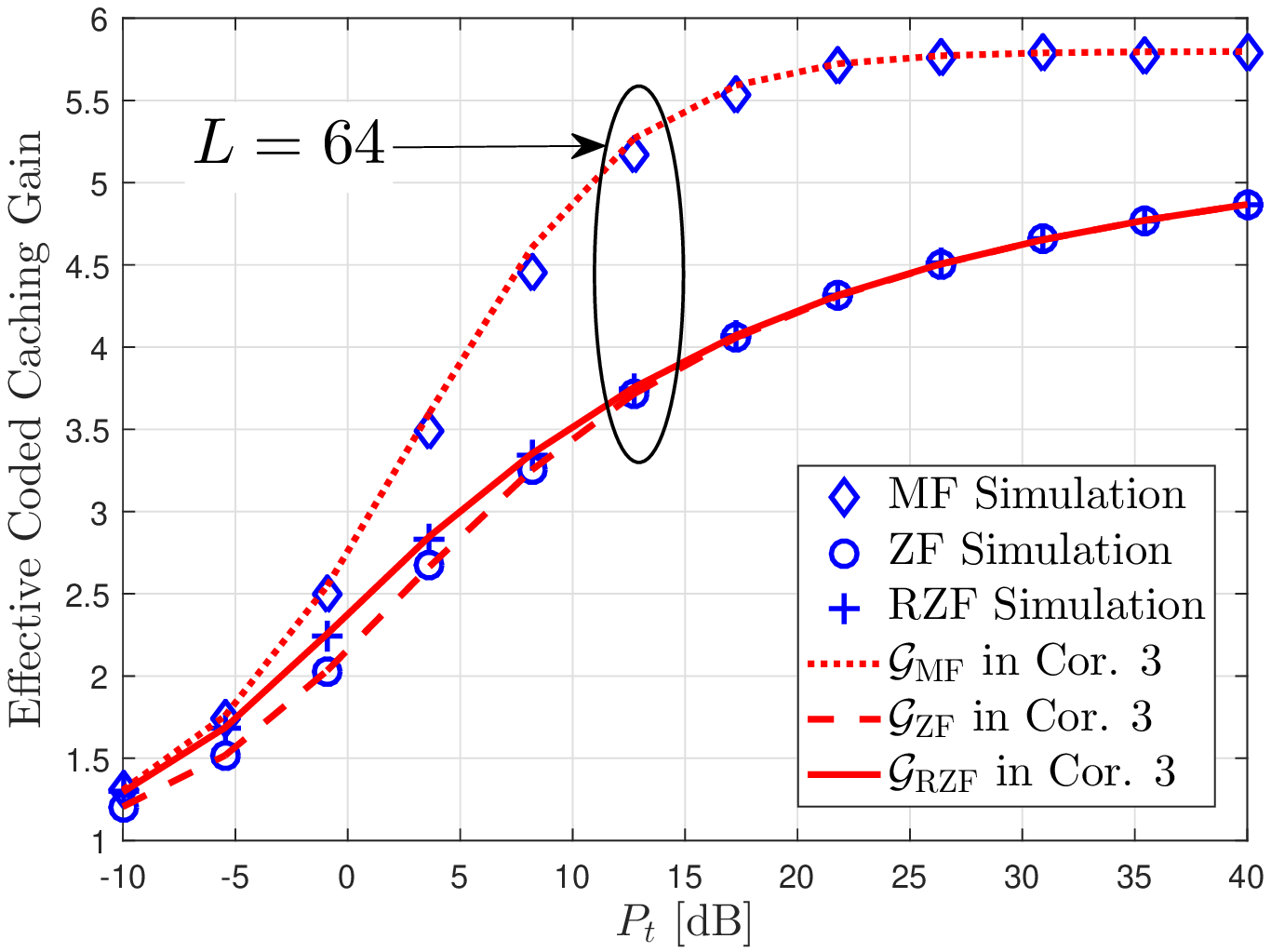}
					\end{subfigure}~%
					\begin{subfigure}[t]{.499\textwidth}\centering					
					\includegraphics[width=3.2 in]{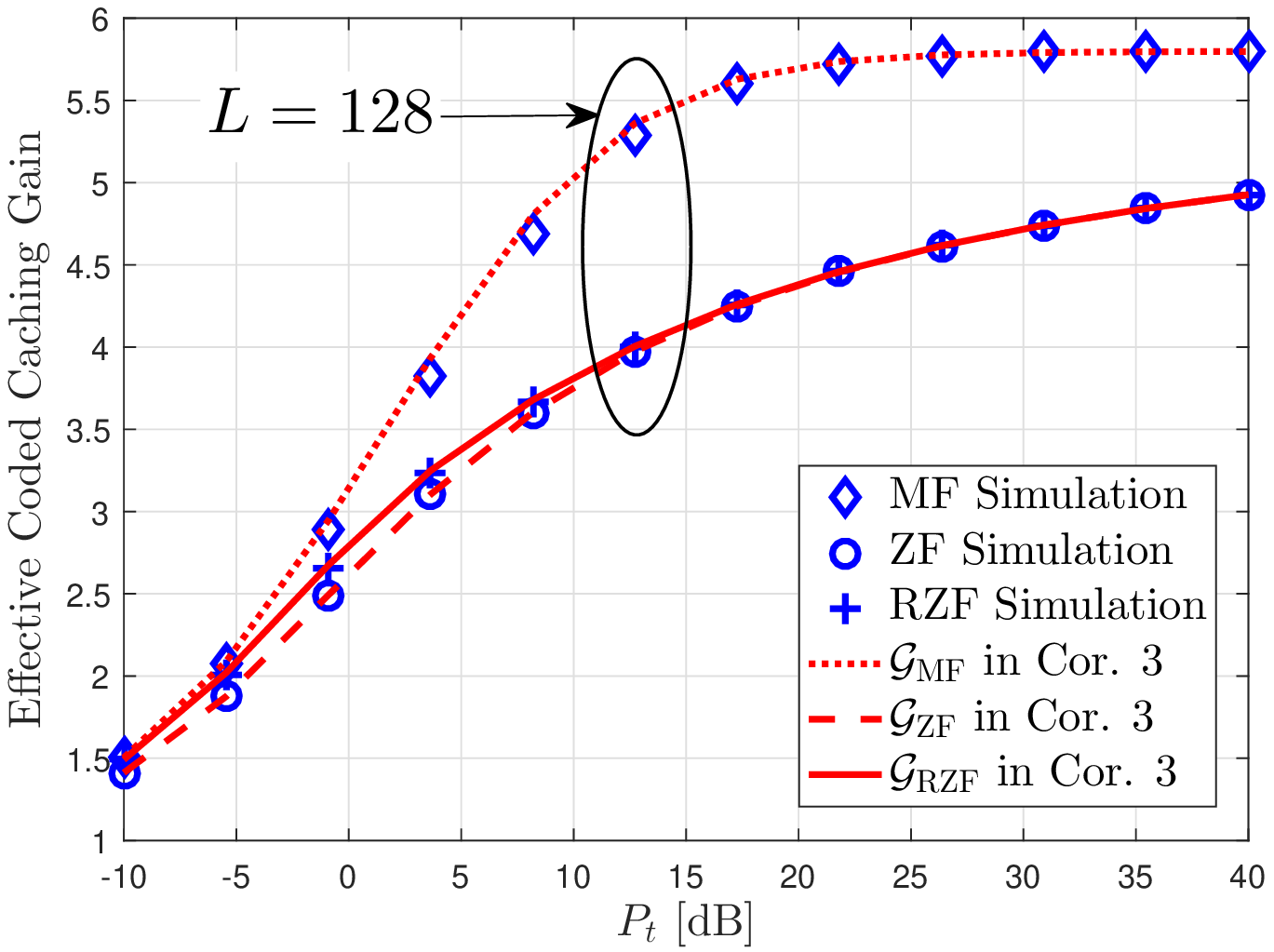}
					\end{subfigure}~\vspace{-0.3cm}
				\caption{Hardening-constrained effective gain over a constrained classical downlink system. $Q$ is fixed for both systems at $Q=8$, while $G=6$.}\vspace{-0.3cm}\label{Gain_fig}
			\end{figure}
The above numerical illustrations refer to theoretical gains of $G=5$ and $G=6$. To give the reader a sense of what such values entail, we offer the following simplifying example scenario. 
\begin{example}
Let us consider the Netflix library, focusing on movies, and let us make the educated speculation that the library entails a zipf parameter close to $1.4$ (see for example~\cite{7945355}). Assume that we choose to apply coded caching on the part of the library that captures $90$\% of the traffic, such that on average, $90$\% of the Netflix traffic will experience a streaming volume reduction by a (theoretical) factor of $G$. Let us assume that the receiving devices are each endowed with a cache of size equal to $25$GB, and let us assume that they stream HD movies whose size is approximately $1.3$GB. The subpacketization constraint will be largely defined by the latency requirements. Assume a latency of two minutes, which can be seamlessly handled with a small buffer. Assuming movies that have about $90$ minutes duration, this translates to file (sub-movie) sizes of approximately 28.8MB. Under the assumption of atomic communication packets of size equal to approximately $50$ bytes, this brings us to a subpacketization of $6 \cdot 10^5$, which allows for a theoretical gain of $G=7$. Under approximately the same conditions, but for Full-HD movies of size $2.47$GB, the corresponding $\gamma< 1/10$, can allow for a gain close to $G=6$. Assuming individual cache sizes of $5$GB, and Standard Definition (SD-$480$p) streaming, the file (sub-movie) sizes become 8.9MB, and thus, with an atomic communication packet size of $200$ bytes, we have subpacketization $4.5 \cdot 10^4$, and a theoretical gain of $G=5$. Going back to our example of the hardening-constrained setting with $Q = 8$, and under the Full-HD assumption, we see that to attain the promised gain of $G=6$ requires a network with at least $Q\Lambda \approx 400$ receiving nodes/antennas, which could represent $K=100$ users with $4$ receive antennas each. Similarly in the aforementioned HD scenario, attaining the theoretical gain of $G=7$ would entail at least $Q\Lambda \approx 240$ receiving nodes/antennas, which could represent $60$ users with $4$ receive antennas each, while in the SD small-cache scenario this corresponds to $34$ single antenna users, or $17$ users with $2$ antennas each. 
\end{example}			

\FloatBarrier

\section{Conclusions}\label{conclude_sec}

This work explores new methods for improving the performance of advanced multi-antenna downlink systems. Such systems constitute the backbone of modern wireless communications, and they have traditionally depended on an optimized interplay between multiplexing and beamforming techniques. While multi-antenna arrays have been without a doubt a most valuable resource and the driving force behind advanced communications technologies, we are now presented with a new and highly complementary and abundant resource in the form of the ever-increasing storage volumes available across even the smallest of communicating nodes.

Motivated by the opportunity offered by this newly abundant resource, our work presented very simple to implement optimized cache-aided linear precoding schemes for the multi-antenna downlink broadcast channel. These schemes simply exploit cached content in order to be able to simultaneously transmit carefully selected precoding vectors that would have otherwise been sent one after the other. Because of the simplicity of this idea, it is conceivable to expect the gains to persist for a broader class of precoders. 
Our performance analysis derives simple expressions that reveal 
significant multiplicative gains from applying caching over already optimized downlink systems, where these gains persist for various well-known precoding classes. This same analysis and optimization are here shown to hold very tight in realistic non-asymptotic settings, while also incorporating a variety of practical considerations such as power dissemination across signals, realistic SNR values, as well as CSI costs. The comparisons of optimized cache-aided vs. optimized cacheless downlink systems, reveal that vector coded caching can recover a sizeable portion of its theoretic (high-SNR) gain $G = \Lambda\gamma+1$, even in realistic wireless settings operating at realistic SNR values. 


In terms of challenges, indeed $G$ remains, under current practices, bounded in the range of single digits. Any improvement beyond this range, would require either a dramatic increase in the storage capability of nodes ($\gamma$), or a dramatic research breakthrough in the area of subpacketization-constrained coded caching. Further improving the subpacketization-constrained performance of coded caching primitives (thus effectively allowing for a larger $\Lambda$), remains to date the big challenge in coded caching, and any progress in that direction would undoubtedly have a profound impact on the performance of cache-aided multi-antenna systems.

The reported gains here will naturally come under pressure from additional realistic considerations such as having statistically asymmetric channels, which is a problem though that can be partially ameliorated with power control, with rate-splitting approaches \cite{LampirisJJPEOsvaldo} \cite{HamdiTiT2021}, or with the novel `brothers' approach in~\cite{Zhao2020_TWC}. 
These same reported gains may also come under pressure from the additional CSI costs that would arise in the event where multi-antenna coded caching algorithms start serving more and more users. Remedies for this can be found in the novel clique structures recently reported in~\cite{lampiris2021_Bottleneck}. 
A big associated open problem is the simultaneous reduction of both the subpacketization and CSI costs (see~\cite{LampirisCSITandSubpackSPAWC} for some early efforts). Naturally the system performance also remains subject to the need for cacheable and live-streamed data to co-exist (see~\cite{HamdiTiT2021}), the need for cache-aided and cacheless users to coexist\footnote{See~\cite{LamEliCachelessTit}, which reveals the surprising conclusion that cacheless users can benefit from full coded caching gains.}, as well as will depend on the stochastic nature of the network topology and user behavior (for some early remedies, the reader can refer to \cite{AdeelTCOM21,AdeelTON}).

The presented new results as well as the aforementioned challenges, arrive at an instance when bandwidth and antenna resources are asked to handle an aggressively increasing volume of data. At the same time though, the new results come at a time when Moore's law on storage capabilities remains intact, as well as at a time when the ever-increasing majority of communicated content is cacheable~\cite{Ciso_forest}. For these reasons, and given the powerful gains reported here, we believe that the aforementioned techniques can further help translate the abundance of Gbytes of storage space, into much needed spectral efficiency.

\vspace{-0.2cm}
\appendices
\renewcommand{\thesectiondis}[2]{\Roman{section}:}
\section{Proof of Theorem \ref{Tight_Bound_MF_coro}}\label{Proof_Thm1_Zhang}
	Similar to the proof of \cite[Lemma 1]{Zhang}, we define $X \triangleq  \frac{P_t}{G c L^2}   \big| {\bf h}_{\psi,k}^T {\bf h}_{\psi,k}^* \big|^2$ and $Y \triangleq 1 + \frac{1}{Q} \sum_{\vartheta=1, \vartheta \neq k}^{Q} Y_\vartheta$, where $Y_\vartheta \triangleq  \frac{P_t}{G L} \big| {\bf h}_{\psi,k}^T {\bf h}_{\psi,\vartheta}^* \big|^2$. From \cite[Lemma 1]{Yeon_Geun_Lim}, we know that $\mathbb{E}\{X\} = \frac{P_t}{c G}(1+1/L)$, ${\rm Var}\{X\} = \frac{P_t^2}{G^2c^2} \Big(\frac{4}{L}  +\frac{10}{L^2} + \frac{6}{L^3}\Big) <\infty$, $\mathbb{E}\{Y_\vartheta\} = P_t/G$ and ${\rm Var}\{Y_\vartheta\} = \frac{P_t^2}{G^2}(1+2/L) < \infty.$ We want to prove that
	\begin{align}\label{identity_proof}
	    \frac{\bar{R}^{\rm MF}(G,cL)}{c \ GL} = \mathbb{E}\left\{ \ln \left( 1+ \frac{X}{Y} \right) \right\} = \ln \left( 1 +  \frac{\mathbb{E}\{X\}}{\mathbb{E}\{Y\}} \right) + o(1), \text{ as } Q=cL \to \infty.
	\end{align}
	
	By applying Jensen’s inequality on $\mathbb{E}\left\{ \ln \left( X+Y \right) \right\}$ and $\mathbb{E}\left\{ \ln \left( Y \right) \right\}$ separately, we can get the following bounds:
	\begin{align}
	   &\ln\left( \frac{1}{  \mathbb{E}\left\{ (X+Y)^{-1} \right\} }  \right) \le \mathbb{E}\left\{ \ln \left( X+Y \right) \right\} \le \ln \left( \mathbb{E}\left\{ X+Y \right\} \right) \\
	   & -\ln \left( \mathbb{E}\left\{ Y \right\} \right)   \le - \mathbb{E}\left\{ \ln \left( Y \right) \right\} \le - \ln \left( \frac{1}{ \mathbb{E} \{Y^{-1}\}} \right),
	\end{align}
	and after combining these two bounds, we get
	\begin{align}\label{bounds_ul}
	    \ln\left( \frac{1}{  \mathbb{E}\left\{ (X+Y)^{-1} \right\} }  \right) - \ln \left( \mathbb{E}\left\{ Y \right\} \right) \le \mathbb{E}\left\{ \ln \left( 1+ \frac{X}{Y} \right) \right\} \le \ln \left( \mathbb{E}\left\{ X+Y \right\} \right) - \ln \left( \frac{1}{ \mathbb{E} \{Y^{-1}\}} \right).
	\end{align}
	On the other hand, Jensen's inequality says that $\mathbb{E}\{Y^{-1}\} \ge 1/\mathbb{E}\{Y\}$ and $\mathbb{E}\left\{ (X+Y)^{-1} \right\} \ge 1/ \mathbb{E}\left\{ (X+Y) \right\}$, which yields
	\begin{align}
	   & \ln \left( 1 +  \frac{\mathbb{E}\{X\}}{\mathbb{E}\{Y\}} \right)  = \ln \left( \mathbb{E}\left\{ X+Y \right\} \right) - \ln \left(  \mathbb{E} \{Y\} \right)  \le \ln \left( \mathbb{E}\left\{ X+Y \right\} \right) - \ln \left( \frac{1}{ \mathbb{E} \{Y^{-1}\}} \right), \label{bouds_u}\\
	   & \ln \left( 1 +  \frac{\mathbb{E}\{X\}}{\mathbb{E}\{Y\}} \right)  \!=\! \ln \left( \mathbb{E}\left\{ X+Y \right\} \right) - \ln \left(  \mathbb{E} \{Y\} \right) \ge \ln\left( \frac{1}{  \mathbb{E}\left\{ (X+Y)^{-1} \right\} }  \right) - \ln \left( \mathbb{E}\left\{ Y \right\} \right). \label{bounds_l}
	\end{align}
		At this point, both $ \mathbb{E}\left\{ \ln \left( 1+ \frac{X}{Y} \right) \right\}$ and $\ln \left( 1 +  \frac{\mathbb{E}\{X\}}{\mathbb{E}\{Y\}} \right)$ are bounded above and below by the same bounds \eqref{bounds_ul}--\eqref{bounds_l}.  The gap between these bounds takes the form
	\begin{align}
	    \Delta & \triangleq  \left\{ \ln \left( \mathbb{E}\left\{ X+Y \right\} \right) - \ln \left( \frac{1}{ \mathbb{E} \{Y^{-1}\}} \right) \right\} - \left\{ \ln\left( \frac{1}{  \mathbb{E}\left\{ (X+Y)^{-1} \right\} }  \right) - \ln \left( \mathbb{E}\left\{ Y \right\} \right) \right\} \notag\\
	    & = \ln \left[ \Big( \mathbb{E}\left\{ X+Y \right\} \mathbb{E}\left\{ (X+Y)^{-1} \right\} \Big) \Big( \mathbb{E}\left\{ Y \right\} \mathbb{E}\left\{ Y^{-1} \right\} \Big) \right].
	\end{align}
	We want to show that this gap vanishes as $Q=cL \to \infty$. By expanding the Taylor series of $Y^{-1}$ at $\mathbb{E}\{Y\}$, we can have that
	\begin{align}\label{limit1_Thm1_proof}
	    \lim_{Q \to \infty} \mathbb{E}\left\{ Y \right\} \mathbb{E}\left\{ Y^{-1} \right\} 
	     &= \lim_{Q \to \infty} \mathbb{E}\left\{ Y \right\} \mathbb{E}\left\{ \frac{1}{\mathbb{E}\{Y\}} - \frac{(Y - \mathbb{E}\{Y\})}{\mathbb{E}^2\{Y\}}  +  \frac{(Y - \mathbb{E}\{Y\})^2}{\mathbb{E}^3\{Y\}}  + \cdots  \right\} \notag\\
	    & = 1  + \lim_{Q \to \infty}\mathbb{E}\{g(Y)\} \overset{(a)}{=} 1  + \mathbb{E}\Big\{\lim_{Q \to \infty} g(Y) \Big\} \overset{(b)}{=} 1,
	\end{align}
	where $g(Y) \triangleq \sum\limits_{n=2}^{\infty} (-1)^n \frac{(Y-\mathbb{E}\{Y\})^n}{\mathbb{E}^n \{Y\}}$, where
	$(a)$ follows from exchanging the order of the limitation and expectation operators (validated via the Dominated Convergence Theorem (DCT))\footnote{To see this, first define $Z \triangleq |Y-\mathbb{E}\{Y\}| \ge 0$. As $Q \to \infty$, $Z \to 0$ (due to the law of large numbers), there always exists a constant $Q_0$ and $\varepsilon <1$  such that $Z < \varepsilon$ for any $Q>Q_0$. For $Z < \varepsilon$, we have that $\sum_{n=2}^\infty Z^n = \frac{Z^2}{1-Z} < \frac{\varepsilon^2}{1-\varepsilon}$. Considering $g(Y) \le \sum_{n=2}^\infty Z^n$ and $\mathbb{E}\{ \sum_{n=2}^\infty Z^n \} < \frac{\varepsilon^2}{1-\varepsilon} < \infty$, which satisfies the DCT condition, yields that $\lim_{Q \to \infty} \mathbb{E}\{ g(Y) \} = \mathbb{E}\{\lim_{Q \to \infty} g(Y)\}$.}, and where $(b)$ follows from using the DCT to exchange the limitation and infinite summation operators in $\lim\limits_{Q \to \infty} g(Y)$ (similar proof method to the step $(a)$) and then by considering that $Y-\mathbb{E}\{Y\} \to 0$ as $Q \to \infty$ (due to the law of large numbers).
	By using similar mathematical manipulations, we also have that \begin{align}\label{limit2_Thm1_proof}
	\lim\limits_{Q = cL \to \infty} \mathbb{E}\left\{ X+Y \right\} \mathbb{E}\left\{ (X+Y)^{-1} \right\} = 1.
	\end{align}
	Considering the two limits \eqref{limit1_Thm1_proof} and \eqref{limit2_Thm1_proof}, we can directly conclude that $\lim\limits_{Q=cL \to \infty} \Delta  = 0$, and therefore prove \eqref{identity_proof}.
	
	Finally, substituting $\mathbb{E}\{X\} = \frac{P_t}{c G}(1+\frac{1}{L})$  and $\mathbb{E}\{Y\} = 1 +  \frac{P_t}{G} \frac{Q-1}{Q}$ into \eqref{identity_proof} and considering $Q=cL\to \infty$, completes the proof of Theorem~\ref{Tight_Bound_MF_coro}.

\section{Proof of Theorem~\ref{Rate_RZF_lemma}}\label{proof_theo_asymp}

We split the proof in three parts. First, we present the proof of~\eqref{SINR_L1_def}. Afterward, we provide two useful lemmas, and then we conclude by deriving the asymptotic deterministic equivalent of the~SINR. 

\vspace{-0.5cm}
\subsection{Proof of~\eqref{SINR_L1_def}}\label{app_sinr}
We provide here the proof of the expression of ${\rm SINR}_{\psi,k}^{\rm RZF}$ in~\eqref{SINR_L1_def}. 
Let us recall that ${\bf H}_{\psi,-k}$ represents the matrix ${\bf H}_{\psi}$ after removing its $k$-th row.  
The useful signal contribution to the received signal in~\eqref{y_RZF_Exp} (omitting the term $\nicefrac{\rho_\psi}{\sqrt{G}}$ for the sake of conciseness) can be written as
\begin{align}\label{signal_useful}
	&{\bf h}_{\psi,k}^T \left( \alpha {\bf I}_L + {\bf H}_\psi^H {\bf H}_\psi \right)^{-1} {\bf h}_{\psi,k}^* s_{\psi,k}
	\ = \ 
	{\bf h}_{\psi,k}^T  \Big( \alpha {\bf I}_L + {\bf H}_{\psi,-k}^H {\bf H}_{\psi,-k} + {\bf h}_{\psi,k}^* {\bf h}_{\psi,k}^T \Big)^{-1} {\bf h}_{\psi,k}^* s_{\psi,k} \notag\\
	&\hspace{3.2cm} \overset{(a)}{=}\  
		\frac{{\bf h}_{\psi,k}^T  \Big(\alpha {\bf I}_L + {\bf H}_{\psi,-k}^H {\bf H}_{\psi,-k}\Big)^{-1} {\bf h}_{\psi,k}^*}{ 1 +  {\bf h}_{\psi,k}^T \left(\alpha {\bf I}_L + {\bf H}_{\psi,-k}^H {\bf H}_{\psi,-k}\right)^{-1} {\bf h}_{\psi,k}^*} s_{\psi,k} 
	\ \overset{(b)}{=} \  
	\frac{A_{\psi,k}}{1+A_{\psi,k}} s_{\psi,k},
\end{align}
where $(a)$ follows from the relation 
\begin{align}\label{Matrix_identity}
		\big( {\bf A} - {\bf B} {\bf D}^{-1} {\bf C} \big)^{-1} {\bf B} {\bf D}^{-1}={\bf A}^{-1} {\bf B} \big( {\bf D} - {\bf C} {\bf A}^{-1} {\bf B}\big)^{-1},
\end{align}
and where $(b)$ follows after applying the definition of $A_{\psi,k}$ from~\eqref{def_a}. 

On the other hand, the power of the interference averaged over data signals in~\eqref{y_RZF_Exp} is given by
\begin{align}
		\big|I_{\psi,k}\big|^2
		&=  \frac{\rho_\psi^2}{G}   \sum_{\substack{\vartheta=1\\   \vartheta \neq k}}^L
		\sum_{\substack{\vartheta^\prime=1\\   \vartheta^\prime \neq k}}^L  {\bf h}_{\psi,\vartheta}^T  \left( \alpha {\bf I}_L + {\bf H}_\psi^H {\bf H}_\psi \right)^{-1} {\bf h}_{\psi,k}^* 
		{\bf h}_{\psi,k}^T \left( \alpha {\bf I}_L + {\bf H}_\psi^H {\bf H}_\psi \right)^{-1} {\bf h}_{\psi,\vartheta^\prime}^*  \mathbb{E} \{s_{\psi,\vartheta}^* s_{\psi,\vartheta^\prime} \} \notag\\
		&= \frac{\rho_\psi^2}{G}
		 {\bf h}_{\psi,k}^T \left( \alpha {\bf I}_L + {\bf H}_\psi^H {\bf H}_\psi \right)^{-1} {\bf H}_{\psi,-k}^H {\bf H}_{\psi,-k}
		 \left( \alpha {\bf I}_L + {\bf H}_\psi^H {\bf H}_\psi \right)^{-1} {\bf h}_{\psi,k}^*.
\end{align}
By applying again the matrix identity in~\eqref{Matrix_identity} and by considering the definitions of $A_{\psi,k}$ and $B_{\psi,k}$ in~\eqref{def_a}-\eqref{def_b},  we can obtain
	\begin{align} 
		\big|I_{\psi,k}\big|^2 \!
	    = \frac{\rho_\psi^2}{G} \frac{{\bf h}_{\psi,k}^T \! \left(\alpha {\bf I}_L + {\bf H}_{\psi,-k}^H {\bf H}_{\psi,-k} \right)^{-1} \! {\bf H}_{\psi,-k}^H {\bf H}_{\psi,-k} \big(\alpha {\bf I}_L + {\bf H}_{\psi,-k}^H {\bf H}_{\psi,-k}\big)^{-1} {\bf h}_{\psi,k}^*}{\left(1+{\bf h}_{\psi,k}^T \left(\alpha {\bf I}_L + {\bf H}_{\psi,-k}^H {\bf H}_{\psi,-k} \right)^{-1} {\bf h}_{\psi,k}^*\right)^2}
		\!=\!  \frac{B_{\psi,k} \rho_\psi^2/G}{(1+A_{\psi,k})^2} \notag
	\end{align}
which combined with~\eqref{signal_useful} yields the expression of ${\rm SINR}_{\psi,k}^{\rm RZF}$ in~\eqref{SINR_L1_def}. This concludes the proof.

\vspace{-0.5cm}
\subsection{Two Useful Lemmas} 
  
	In the following, we present two lemmas that are instrumental in the derivation of Lemma~\ref{Rate_RZF_lemma}. 
    \begin{lemma}\label{first_lemma_proof}
		For any fixed $c$, $0<c <\infty$, the trace of $\frac{1}{L} \left( z {\bf I}_L + \frac{1}{L}{\bf H}_\psi^H {\bf H}_\psi \right)^{-1}$ converges to $S_c(z)$ almost surely as $L\to\infty$, where $S_c(z)$ is defined as 
		\begin{align}
				S_c(z) &\triangleq \frac{1}{2} \bigg( \sqrt{ \frac{(1-c)^2}{z^2} + \frac{2 (1+c)}{z} +1} +  \frac{1-c}{z} -1 \bigg). \label{S_z_def}
		\end{align}		
    \end{lemma}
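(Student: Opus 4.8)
The plan is to recognize the target quantity as a Stieltjes transform and to pin down its almost-sure limit through Marchenko--Pastur machinery. First I would write
\begin{align}
\frac{1}{L}{\rm Tr}\Big( z{\bf I}_L + \tfrac{1}{L}{\bf H}_\psi^H {\bf H}_\psi \Big)^{-1} = \int_0^\infty \frac{1}{\lambda + z}\, \dd F_L(\lambda),
\end{align}
where $F_L$ is the empirical spectral distribution (ESD) of $\frac{1}{L}{\bf H}_\psi^H {\bf H}_\psi$; this is exactly the Stieltjes transform of $F_L$ evaluated at the negative-real point $-z$. The key observation is that for fixed $z>0$ the test function $\lambda \mapsto (\lambda+z)^{-1}$ is bounded and continuous on $[0,\infty)$, so once the ESD is shown to converge weakly almost surely, the integral converges almost surely to the corresponding integral against the limiting law, with no need to control the extreme eigenvalues.

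Next I would establish the limit itself. The cleanest route is to invoke the Marchenko--Pastur theorem (cf.~\cite{Tulino_matrix}): since the entries of ${\bf H}_\psi$ are i.i.d.\ with zero mean and unit variance and $Q/L = c$ is fixed, $F_L$ converges weakly, almost surely, to the Marchenko--Pastur law of ratio $c$. Combined with the bounded-continuous test function above, this yields $\frac{1}{L}{\rm Tr}(z{\bf I}_L + \frac{1}{L}{\bf H}_\psi^H {\bf H}_\psi)^{-1} \stackrel{a.s.}{\longrightarrow} S_c(z)$, the value of the limiting Stieltjes transform at $-z$. Alternatively, and more self-containedly, I would derive the same limit directly from the resolvent ${\bf Q} \triangleq (z{\bf I}_L + \frac{1}{L}{\bf H}_\psi^H {\bf H}_\psi)^{-1}$ by writing $\frac{1}{L}{\bf H}_\psi^H {\bf H}_\psi = \frac{1}{L}\sum_{j=1}^{Q}{\bf h}_{\psi,j}^*{\bf h}_{\psi,j}^T$ and applying the rank-one matrix-inversion lemma together with the trace (quadratic-form concentration) lemma; a standard variance/martingale bound then upgrades convergence in probability to almost-sure convergence of $\frac{1}{L}{\rm Tr}\,{\bf Q}$ to a deterministic $\bar\delta$.

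To obtain the closed form I would derive the self-consistent equation. Using the identity $z{\bf Q} = {\bf I}_L - \frac{1}{L}\sum_{j}{\bf h}_{\psi,j}^*{\bf h}_{\psi,j}^T{\bf Q}$, taking the normalized trace, and replacing each rank-one form by its deterministic equivalent $\frac{1}{L}{\bf h}_{\psi,j}^T{\bf Q}\,{\bf h}_{\psi,j}^* \doteq \frac{\bar\delta}{1+\bar\delta}$, I obtain $z\bar\delta = 1 - c\,\frac{\bar\delta}{1+\bar\delta}$, i.e.\ the quadratic
\begin{align}
z\,\bar\delta^2 + (z + c - 1)\,\bar\delta - 1 = 0.
\end{align}
Solving this and selecting the branch that keeps $\bar\delta>0$ for $z>0$ (as forced by $\bar\delta$ being the limit of the manifestly positive quantity $\frac{1}{L}\sum_i (\lambda_i+z)^{-1}$), the discriminant simplifies to $(1-c)^2 + 2(1+c)z + z^2$, and factoring out $1/(2z)$ reproduces exactly the stated expression for $S_c(z)$.

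The hard part will be bookkeeping rather than conceptual. The main subtlety is the nonstandard normalization: the Gram matrix is scaled by $1/L$ while it is a sum of $Q=cL$ rank-one terms, so I must track carefully where $c$ enters in order not to misplace the ratio in the fixed-point equation. The resolvent derivation has the advantage of handling all $c\in(0,\infty)$ uniformly, whereas an eigenvalue-counting derivation via the companion $Q\times Q$ matrix $\frac{1}{L}{\bf H}_\psi{\bf H}_\psi^H$ would require separately accounting for the $\max(0,1-c)L$ zero eigenvalues. The second delicate point is the rigorous branch selection of the square root, which I would justify through analyticity and the positivity of the Stieltjes transform on the negative real axis; the concentration steps themselves are routine for Gaussian entries.
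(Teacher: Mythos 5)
Your proposal is correct and takes essentially the same route as the paper, which simply identifies the quantity as the Stieltjes transform of the ESD of $\frac{1}{L}{\bf H}_\psi^H{\bf H}_\psi$ and cites the known Marchenko--Pastur result from Couillet--Debbah (Ch.~3) without giving any details. Your derivation fills in exactly what that citation stands for: the fixed-point equation $z\bar\delta^2+(z+c-1)\bar\delta-1=0$ and its positive branch do reproduce $S_c(z)$ with the stated discriminant, and your handling of the $1/L$ (rather than $1/Q$) normalization and of the bounded-continuous test function is sound.
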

    \begin{proof}
		This lemma can be obtained as a direct application of a known result from~\cite[Ch. 3]{Debbah} for the Stieltjes transform\cite{widder1938stieltjes}. Hence, we omit the proof due to the page limitation and refer the reader to~\cite[Ch. 3]{Debbah} for more details.
    \end{proof}

    \begin{lemma}\label{second_lemma_proof}\vspace{-0.3cm}
		For any fixed $0<c<\infty$ and arbitrary $0<\theta<\infty $, we have that, as $L\to\infty$, 
		\begin{align}
		{\rm Tr} \left\{ \frac{1}{L} \left(\theta {\bf I} + \frac{1}{L} {\bf H}_{\psi,-k}^H  {\bf H}_{\psi,-k}\right)^{-2} \right\} 
		\stackrel{a.s.}{\longrightarrow} {\rm Tr} \left\{ \frac{1}{L} \left(\theta {\bf I} + \frac{1}{L} {\bf H}_{\psi}^H  {\bf H}_{\psi}\right)^{-2} \right\}.
		\end{align}
    \end{lemma}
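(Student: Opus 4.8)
The plan is to exploit the fact that ${\bf H}_{\psi,-k}$ is obtained from ${\bf H}_{\psi}$ by deleting a single row, so that the two Gram matrices differ only by a rank-one positive-semidefinite term. Writing ${\bf M} \triangleq \frac{1}{L}{\bf H}_{\psi}^H {\bf H}_{\psi}$ and ${\bf M}_{-k} \triangleq \frac{1}{L}{\bf H}_{\psi,-k}^H {\bf H}_{\psi,-k}$, we have ${\bf M} = {\bf M}_{-k} + \frac{1}{L}{\bf h}_{\psi,k}^* {\bf h}_{\psi,k}^T$. Both matrices are Hermitian and positive semidefinite, so for any $\theta > 0$ the resolvents $(\theta {\bf I} + {\bf M})^{-1}$ and $(\theta {\bf I} + {\bf M}_{-k})^{-1}$, as well as their squares, are well defined with spectrum contained in $(0, \theta^{-2}]$. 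The goal is to show that the two normalized traces of the squared resolvents coincide in the limit, and I would do this by bounding their difference \emph{deterministically}, i.e.\ for each fixed channel realization.

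First I would invoke the eigenvalue interlacing induced by the rank-one update: denoting by $\lambda_1 \ge \cdots \ge \lambda_L$ and $\mu_1 \ge \cdots \ge \mu_L$ the eigenvalues of ${\bf M}$ and ${\bf M}_{-k}$ respectively, a rank-one positive-semidefinite perturbation yields $\lambda_1 \ge \mu_1 \ge \lambda_2 \ge \mu_2 \ge \cdots \ge \lambda_L \ge \mu_L$. Consequently, the (unnormalized) eigenvalue counting functions $N_{{\bf M}}(x) \triangleq \bigl|\{i : \lambda_i > x\}\bigr|$ and $N_{{\bf M}_{-k}}(x)$ satisfy $0 \le N_{{\bf M}}(x) - N_{{\bf M}_{-k}}(x) \le 1$ for every $x$. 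Taking $f(x) \triangleq (\theta + x)^{-2}$, which is monotone decreasing on $[0,\infty)$ with total variation $V_f = f(0) - f(\infty) = \theta^{-2}$, a Riemann--Stieltjes integration-by-parts argument gives the classical rank-one trace bound $\bigl| {\rm Tr}\{f({\bf M})\} - {\rm Tr}\{f({\bf M}_{-k})\} \bigr| = \bigl| \int\! \bigl(N_{{\bf M}}(x) - N_{{\bf M}_{-k}}(x)\bigr)\, \mathrm{d}f(x) \bigr| \le V_f = \theta^{-2}$. Dividing by $L$ then gives the deterministic estimate
\[
\Bigl| {\rm Tr} \Bigl\{ \tfrac{1}{L}\bigl(\theta {\bf I} + {\bf M}_{-k}\bigr)^{-2} \Bigr\} - {\rm Tr} \Bigl\{ \tfrac{1}{L}\bigl(\theta {\bf I} + {\bf M}\bigr)^{-2} \Bigr\} \Bigr| \le \frac{1}{L\,\theta^{2}},
\]
which holds for every channel realization and vanishes as $L \to \infty$; almost-sure (indeed sure) convergence follows at once.

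The step I expect to require the most care is the rank-one trace inequality itself, i.e.\ justifying $|{\rm Tr}\{f({\bf M})\} - {\rm Tr}\{f({\bf M}_{-k})\}| \le V_f$ through the interlacing counting functions and verifying that the boundary terms in the integration by parts vanish (they do, since $N_{{\bf M}} - N_{{\bf M}_{-k}}$ is compactly supported and $f$ is bounded). This is a standard fact in large random-matrix analysis and can be cited rather than re-derived. An alternative route, if one prefers to lean directly on Lemma~\ref{first_lemma_proof}, is to write $\frac{1}{L}{\rm Tr}\{(\theta {\bf I} + {\bf M})^{-2}\} = -\frac{\partial}{\partial \theta}\, \frac{1}{L}{\rm Tr}\{(\theta {\bf I} + {\bf M})^{-1}\}$ and note that, since both aspect ratios $Q/L$ and $(Q-1)/L$ tend to $c$, Lemma~\ref{first_lemma_proof} forces both normalized Stieltjes transforms to converge to $S_c(\theta)$; the delicate point in that approach is justifying the interchange of limit and $\theta$-derivative (e.g.\ via Vitali's theorem for the uniformly bounded analytic resolvent traces), which is precisely why I favor the self-contained interlacing bound above.
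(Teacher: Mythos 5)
Your proof is correct, and it reaches the conclusion by a genuinely different route from the paper. The paper applies the Woodbury identity to $\bigl(\theta {\bf I} + {\bf M}_{-k} + \tfrac{1}{L}{\bf h}_{\psi,k}^*{\bf h}_{\psi,k}^T\bigr)^{-1}$, expands the trace difference $\delta$ into two explicit quadratic-form ratios $\Theta_1,\Theta_2$, and after an eigendecomposition bounds them by $2/(\theta^2 L)$ and $1/(\theta^2 L)$ respectively, so that $\delta \le 3/(\theta^2 L) \to 0$. You instead observe that the two Gram matrices differ by a rank-one positive-semidefinite update, invoke Weyl/Cauchy interlacing to get $|N_{{\bf M}} - N_{{\bf M}_{-k}}| \le 1$ pointwise, and convert this into the classical rank-one trace bound $|{\rm Tr}\{f({\bf M})\} - {\rm Tr}\{f({\bf M}_{-k})\}| \le V_f = \theta^{-2}$ for $f(x) = (\theta+x)^{-2}$, giving the (slightly tighter) deterministic estimate $1/(L\theta^2)$. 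Both arguments are per-realization and hence give sure, not merely almost-sure, convergence; yours is shorter, applies verbatim to any monotone function of bounded variation (so it would also cover the first-order resolvent trace used elsewhere in the appendix), and amounts to the standard rank-one perturbation lemma that the paper already cites in a neighboring step, whereas the paper's Woodbury computation is fully self-contained and makes the $O(1/L)$ terms explicit. Your closing remark about the alternative route through Lemma~\ref{first_lemma_proof} and differentiation of the Stieltjes transform, and the need for Vitali's theorem to justify interchanging the limit and the $\theta$-derivative, is also accurate; the paper sidesteps that issue exactly as you do, by bounding the second-order trace difference directly.
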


    \begin{proof}
		Let us first define ${\bf A} \triangleq \theta {\bf I} + \frac{1}{L} {\bf H}_{\psi,-k}^H {\bf H}_{\psi,-k}$, and let us also define
		\begin{align}
		    \delta \triangleq \ &\ \Big| {\rm Tr} \Big\{ \frac{1}{L} \big(\theta {\bf I} + \frac{1}{L} {\bf H}_{\psi,-k}^H  {\bf H}_{\psi,-k}\big)^{-2} \Big\} 
		        - {\rm Tr} \Big\{ \frac{1}{L} \big(\theta {\bf I} + \frac{1}{L} {\bf H}_{\psi}^H  {\bf H}_{\psi}\big)^{-2} \Big\} \Big|.
		\end{align}
		By applying the Woodbury matrix identity\cite{woodbury1950inverting}, 
		we can rewrite $\delta$ as 
		\begin{align}\label{Delta_long}
		\delta 
		&=  \Big| \frac{1}{L} {\rm Tr} \Big\{  \frac{2}{L}\frac{ {\bf h}_k^T {\bf A}^{-3} {\bf h}_k^*}{1+ \frac{1}{L} {\bf h}_k^T {\bf A}^{-1} {\bf h}_k^*} - \frac{1}{L^2} \frac{ ({\bf h}_k^T {\bf A}^{-2} {\bf h}_k^*) ({\bf h}_k^T {\bf A}^{-2} {\bf h}_k^*)}{\left(1+ \frac{1}{L} {\bf h}_k^T {\bf A}^{-1} {\bf h}_k^*\right)^2} \Big\} \Big|,
		\end{align}
		which can be further rewritten as
		$
		\delta = \big| \Theta_1 - \Theta_2 \big|, 
		$
		where
			$\Theta_1 \triangleq   \frac{2}{L^2} \frac{ {\bf h}_k^T {\bf A}^{-3} {\bf h}_k^*}{1+ \frac{1}{L} {\bf h}_k^T {\bf A}^{-1} {\bf h}_k^*}$,  and 
			$ \Theta_2 \triangleq \frac{1}{L^3} \left( \frac{{\bf h}_k^T {\bf A}^{-2} {\bf h}_k^*} {1+\frac{1}{L} {\bf h}_k^T {\bf A}^{-1} {\bf h}_k^*}\right)^2$. 
		Furthermore, we can apply eigenvalue decomposition by factorizing $\frac{1}{L} {\bf H}_{\psi,-k}^H {\bf H}_{\psi,-k}$ as $\frac{1}{L} {\bf H}_{\psi,-k}^H {\bf H}_{\psi,-k} = {\bf Q} {\bf \Lambda} {\bf Q}^H$, which yields ${\bf A}^{-1} =  {\bf Q} \left( \theta {\bf I} + {\bf \Lambda} \right)^{-1} {\bf Q}^H$, and ${\bf A}^{-3} =  {\bf Q} \left(\theta {\bf I} + {\bf \Lambda} \right)^{-3} {\bf Q}^H$.
		Thus, upon defining ${\bf g} \triangleq {\bf Q}{\bf h}^*_k/\sqrt{L}$, the term  $\Theta_1$ can be rewritten as
			\begin{align}\label{Theta_1_limit}
				\Theta_1 &
				= \frac{2}{L^2} \frac{ {\bf h}_k^T {\bf Q} \left(\theta {\bf I} + {\bf \Lambda} \right)^{-3} {\bf Q}^H {\bf h}_k^*}{1+ \frac{1}{L} {\bf h}_k^T {\bf Q} \left(\theta {\bf I} + {\bf \Lambda} \right)^{-1} {\bf Q}^H {\bf h}_k^*} 
				=\frac{2}{L} \frac{ {\bf g}^H \left(\theta {\bf I} + {\bf \Lambda} \right)^{-3} {\bf g}}{1+ {\bf g}^H \left(\theta {\bf I} + {\bf \Lambda} \right)^{-1} {\bf g}} \notag\\
				& = \frac{2}{L}\frac{\sum_{\ell=1}^L |g_\ell|^2 \frac{1}{(\theta+\lambda_\ell)^3}}{1+\sum_{\ell=1}^L |g_\ell|^2 \frac{1}{\theta+\lambda_\ell}}
				\le \frac{2}{\theta^2 L}\frac{\sum_{\ell=1}^L |g_\ell|^2 \frac{1}{\theta+\lambda_\ell}}{1+\sum_{\ell=1}^L |g_\ell|^2 \frac{1}{\theta+\lambda_\ell}} 
				\le \frac{2}{\theta^2 L} \to 0, \text{ as } L \to \infty,
			\end{align}
		where $g_\ell$ and $\lambda_\ell$ are the $\ell$-th element of $\bf g$ and the $\ell$-th eigenvalue of $\frac{1}{L}{\bf H}_{\psi,k}^H {\bf H}_{\psi,k}$, respectively. 
		Similarly, 
		we have  that
			\begin{align}\label{Theta_2_limit}
				&\Theta_2 
				\!=\! \frac{1}{L} \left(\frac{{\bf g}^H (\theta {\bf I}+ {\bf \Lambda})^{-2} {\bf g}}{1+{\bf g}^H (\theta {\bf I}+{\bf \Lambda})^{-1} {\bf g}}\right)^2 
				\!\le\! \frac{1}{\theta^2 L} \left( \frac{\sum_{\ell=1}^L |g_\ell|^2 \frac{1}{\theta+\lambda_\ell}}{1+\sum_{\ell=1}^L |g_\ell|^2 \frac{1}{\theta+\lambda_\ell}} \right)^2 
				\!\le\! \frac{1}{\theta^2 L} \to 0, \text{ as } L \to \infty.
			\end{align}
		Finally, from \eqref{Theta_1_limit}, \eqref{Theta_2_limit}, and from the fact that $\delta \le \big| \Theta_1\big|+ \big|\Theta_2 \big|$, 
		the difference $\delta$ approaches zero almost surely as $L \to \infty$. This concludes the proof of Lemma~\ref{second_lemma_proof}.\end{proof}

    \vspace{-0.5cm}\subsection{Proof of  Theorem~\ref{Rate_RZF_lemma}}\label{proof_lem_asymp}

        We obtain Theorem~\ref{Rate_RZF_lemma} 
        by deriving the asymptotic deterministic equivalent  of ${\rm SINR}_{\psi,k}$ in~\eqref{SINR_L1_def}. For that, we first derive the asymptotic deterministic equivalent of~$A_{\psi,k}$ and~$\rho_\psi^2$. 
	
		Let us start by considering $A_{\psi,k}$, defined in~\eqref{def_a}. By means of the Trace Lemma and the Rank-1 Perturbation Lemma from~\cite{Debbah}, we can obtain that
		\begin{align}\label{A_appendix}
			A_{\psi,k} &= {\bf h}_{\psi,k}^T  \Big(\alpha {\bf I}_L + {\bf H}_{\psi,-k}^H {\bf H}_{\psi,-k}\Big)^{-1} {\bf h}_{\psi,k}^*  
			 \stackrel{a.s.}{\longrightarrow}  {\rm Tr} \left\{ \Big(\alpha {\bf I}_L + {\bf H}_{\psi}^H {\bf H}_{\psi}\Big)^{-1} \right\}
		\end{align}
		as $L \to \infty$. 
		From this, we can apply Lemma~\ref{first_lemma_proof} and the fact that $\alpha = L/P_t$ to obtain the deterministic equivalent of $A_{\psi,k}$, which we denote as $a_{\psi,k}$, and which is given by 
		 $a_{\psi,k} = S_c\left( \frac{1}{P_t} \right)$, where 	$	S_c(z) = \frac{1}{2} \Big[ \sqrt{ \frac{(1-c)^2}{z^2} + \frac{2 (1+c)}{z} +1} +  \frac{1-c}{z} -1 \Big]$ as defined in~\eqref{S_z_def}. This yields the expression of~$a_{\psi,k}$ in~\eqref{A_psi_K_deter}. 
		
		Next, we focus on $B_{\psi,k}$, introduced in~\eqref{def_b}, and we again apply the Trace Lemma and the Rank-1 Perturbation Lemma from~\cite{Debbah} in the limit of $L\to\infty$ to obtain that
		\begin{align}\label{B_appendix}
			B_{\psi,k} &  \stackrel{a.s.}{\longrightarrow} \frac{1}{L} {\rm Tr} \Big\{ \frac{1}{L} {\bf H}_{\psi,-k}^H {\bf H}_{\psi,-k} \Big(\frac{1}{P_t} {\bf I}_L + \frac{1}{L} {\bf H}_{\psi,-k}^H {\bf H}_{\psi,-k}\Big)^{-2} \Big\} \notag\\
			&= \frac{1}{L}{\rm Tr} \Big\{ \Big(\frac{1}{P_t} {\bf I}_L + \frac{1}{L}{\bf H}_{\psi,-k}^H {\bf H}_{\psi,-k}\Big)^{-1} \Big\}
			- \frac{1}{P_tL}{\rm Tr} \Big\{ \Big(\frac{1}{P_t} {\bf I}_L + \frac{1}{L} {\bf H}_{\psi,-k}^H {\bf H}_{\psi,-k}\Big)^{-2} \Big\}.
		\end{align}
	The first trace term of the R.H.S. of~\eqref{B_appendix}  matches~\eqref{A_appendix}, and thus its deterministic equivalent is $a_{\psi,k}$. 
	With respect to the second term of the R.H.S. of~\eqref{B_appendix}, applying Lemmas~\ref{first_lemma_proof} and~\ref{second_lemma_proof} yields
		\begin{align}\label{second_part_B}
			\frac{1}{P_tL} & {\rm Tr} \Big\{ \Big(\frac{1}{P_t} {\bf I}_L + \frac{1}{L} {\bf H}_{\psi,-k}^H {\bf H}_{\psi,-k}\Big)^{-2} \Big\} \stackrel{a.s.}{\longrightarrow} \frac{1}{P_t L} {\rm Tr}\Big\{ \Big(\frac{1}{P_t}{\bf I}_L + \frac{1}{L} {\bf H}_{\psi}^H  {\bf H}_{\psi}\Big)^{-2} \Big\} \notag\\
			&= \frac{1}{P_tL} \sum\nolimits_{\ell=1}^L \frac{1}{(\lambda_\ell+1/P_t)^2}
			= -\frac{1}{P_t} \frac{\partial}{\partial z} \Big( \frac{1}{L} \sum\nolimits_{\ell=1}^{L} \frac{1}{\lambda_\ell+z} \Big)\Big|_{z=1/P_t}\notag\\
			& = -\frac{1}{P_t} \frac{\partial}{\partial z} \Big(  {\rm Tr} \Big\{ \frac{1}{L} \Big(z {\bf I}_L + \frac{1}{L} {\bf H}_\psi^H {\bf H}_\psi \Big)^{-1} \Big\} \Big)\Big|_{z=1/P_t}
			\stackrel{a.s.}{\longrightarrow} -\frac{1}{P_t} \frac{\partial S_c(z)}{\partial z} \Big|_{z=1/P_t},
		\end{align}
		as $L\to\infty$,  where $\{\lambda_\ell\}_{\ell=1}^L$ are the eigenvalues of $\frac{1}{L} {\bf H}_\psi^H {\bf H}_\psi$ and where $\frac{\partial S_c(z)}{\partial z}$ is the derivative of $S_c(z)$ with respect to $z$, which is given by	
		\begin{align}
				\frac{\partial S_c(z)}{\partial z}&= \frac{1}{2} \left[ \frac{-c^2-c(z-2)-z-1}{z^2 \sqrt{c^2+2c(z-1)+(z+1)^2}} -\frac{1-c}{z^2} \right].\label{eq:def_derivate_sz}
		\end{align}	
		From \eqref{B_appendix} and  \eqref{second_part_B} it holds that 
		\begin{align}\label{eq:ade_B}
				B_{\psi,k} \stackrel{a.s.}{\longrightarrow} b_{\psi,k}
				 \triangleq a_{\psi,k}+\frac{1}{P_t} \frac{\partial S_c(z)}{\partial z} \Big|_{z=1/P_t}\qquad \text{as }L\to\infty.
		\end{align}
	To conclude, we focus on the power control factor for the RZF precoder, which was given by 
	$\rho_\psi^2 = \frac{P_t}{\frac{1}{L}{\rm Tr} \{ \frac{1}{L} {\bf H}_\psi^H  {\bf H}_\psi ( \frac{1}{L}{\bf H}_\psi^H {\bf H}_\psi + \frac{1}{P_t} {\bf I}_L)^{-2} \}}$.
	In view of the derivation of $B_{\psi,k}$ and \eqref{B_appendix}--\eqref{second_part_B}, it follows that $\rho_\psi^2 \stackrel{a.s.}{\longrightarrow}  \frac{P_t}{b_{\psi,k}}$. 
    Thus, we have that the asymptotic deterministic equivalent of $\rho_\psi^2$, denoted by $p_\psi^2$, takes the form $p_\psi^2 = \frac{P_t}{b_{\psi,k}}$, 
    which, upon substituting~\eqref{eq:def_derivate_sz} in $b_{\psi,k}$, yields the expression of $p_\psi^2$ in~\eqref{rho_psi_K_deter}.
	
	Next, we obtain the  asymptotic deterministic equivalent of ${\rm SINR}_{\psi,k}$ by substituting the asymptotic deterministic equivalent of $A_{\psi,k}$, $B_{\psi,k}$ and $\rho_\psi^2$ into~\eqref{SINR_L1_def}, which yields
		\begin{align}\label{SINR_L1_def:asymp}
			{\rm SINR}_{\psi,k}^{\rm RZF}  \stackrel{a.s.}{\longrightarrow}
			\frac{a_{\psi,k}^2{p_\psi^2}/{G} }{\big(1+a_{\psi,k} \big)^2+\frac{P_t}{G}}.
		\end{align}	
	Finally, a direct application of the Continuous Mapping Theorem \cite{Vaart_book} yields~\eqref{Rate_RZF_eq}, which concludes the proof of Theorem~\ref{Rate_RZF_lemma}. \qed

	\bibliographystyle{IEEEtran}				
	\bibliography{IEEEabrv,aBiblio}			
\end{document}